\newcommand{\alg}[1]{\mathbf{#1}}
\newcommand{\algA}{\alg{A}}
\newcommand{\algB}{\alg{B}}
\newcommand{\algC}{\alg{C}}
\newcommand{\algL}{\alg{L}}
\newcommand{\algS}{\alg{S}}
\newcommand{\algU}{\alg{U}}
\newcommand{\N}{\mathbb N}
\newcommand{\Z}{\mathbb Z}
\DeclareMathOperator{\CSP}{\mathrm{CSP}}
\DeclareMathOperator{\Inv}{Inv}
\DeclareMathOperator{\Pol}{Pol}
\DeclareMathOperator{\Sig}{Sig}
\DeclareMathOperator{\Con}{Con}
\DeclareMathOperator{\HHH}{H}
\DeclareMathOperator{\SSS}{S}
\DeclareMathOperator{\PPP}{P}
\DeclareMathOperator{\HS}{HS}
\DeclareMathOperator{\HSfactors}{\mathcal F}
\DeclareMathOperator{\HSP}{HSP}
\DeclareMathOperator{\SMP}{SMP}
\DeclareMathOperator{\proj}{proj}
\DeclareMathOperator{\Sg}{Sg}
\DeclareMathOperator{\NP}{NP}
\DeclareMathOperator{\coNP}{co-NP}
\DeclareMathOperator{\dom}{dom}
\DeclareMathOperator{\Clo}{Clo}
\newtheorem{example}[theorem]{Example}
\newtheorem{conjecture}[theorem]{Conjecture}
\newtheorem{question}[theorem]{Question}
\title{Polynomial definability in constraint languages with few subpowers
    \thanks{
        A preliminary version was published in the proceedings of MFCS 2023 \cite{bulin_short_2023}.
        \funding{Both authors were supported by the Czech Science Foundation project 25-16324S and the MŠMT ČR INTER-EXCELLENCE project LTAUSA19070. The first author was supported by Charles University Research Centre UNCE/SCI/004. The second author was supported by grants PRIMUS/24/SCI/008 and UNCE/24/SCI/022 of Charles University.
        }
    }
}
\author{
    Jakub Bulín
        \thanks{Department of Theoretical Computer Science and Mathematical Logic, Faculty of Mathematics and Physics, Charles University, Prague, Czechia (\email{jakub.bulin@mff.cuni.cz}).}
    \and 
    Michael Kompatscher
        \thanks{Department of Algebra, Faculty of Mathematics and Physics, Charles University, Prague, Czechia (\email{kompatscher@karlin.mff.cuni.cz}).}
}
\begin{document}

\maketitle

\begin{abstract}
    A first-order formula is called \emph{primitive positive (pp)} if it uses only existential quantifiers and conjunctions. Primitive positive formulas are a central concept in (fixed-template) constraint satisfaction, as $\CSP(\Gamma)$ can be viewed as the problem of deciding the primitive positive theory of $\Gamma$, and pp-definability captures gadget reductions between CSPs. An important class of tractable constraint languages $\Gamma$ is characterized by the property of having \emph{few subpowers}, meaning that the number of $n$-ary relations pp-definable from $\Gamma$ is bounded by $2^{p(n)}$ for some polynomial $p(n)$. In this paper, we study a restriction of this property, namely that every pp-definable relation is definable by a pp-formula of polynomial length. We conjecture that the existence of such \emph{short definitions} is actually equivalent to $\Gamma$ having few subpowers, and we verify this conjecture for a large subclass, which in particular includes all constraint languages on three-element domains. Furthermore, we discuss how our conjecture imposes an upper complexity bound of $\coNP$ on the subpower membership problem for algebras with few subpowers.
\end{abstract}

\begin{keywords}
    constraint satisfaction, primitive positive definability, few subpowers, polynomially expressive, relational clone, subpower membership
\end{keywords}

\begin{MSCcodes}
    68T27, 08A70, 68Q25
\end{MSCcodes}

\section{Introduction}

Constraint satisfaction provides a unifying framework for expressing a wide range of computational tasks arising from a smorgasbord of real-life applications and theoretical contexts. In a CSP instance, the goal is to assign values to variables subject to a list of \emph{constraints} to be satisfied. In the most general setting, a constraint consists of a tuple of variables (its \emph{scope}) and a list of admissible evaluations of the scope (i.e., tuples of values, forming the \emph{constraint relation}). Usually, the set of variables, the sets of admissible values for every variable (its \emph{domain}), and the list of input constraints are all finite. This simple formulation strikes a `perfect balance between generality and structure'~\cite{barto_polymorphisms_2017}.

In this general formulation, the CSP is an NP-complete problem: classical problems such as SAT or graph 3-colorability are easily expressible in this framework. Nevertheless, many problems subsumed by it are tractable, e.g., 2-SAT, Horn-SAT, or checking the consistency of a system of linear equations over $\mathbb Z_p$. A natural way to explore the complexity landscape of the CSP, justified by applications as well as theory~\cite{feder_computational_1998}, is to fix a finite domain $A$ and a finite set of relations $\Gamma$ on $A$ that are allowed to appear as constraints (a \emph{constraint language})~\cite{schaefer_complexity_1978}; the resulting fragment of the CSP is usually denoted by $\CSP(\Gamma)$. In the literature, sometimes also constraint languages on infinite domains, or with infinitely many relations are considered. But in this paper we will keep the two finiteness assumptions throughout.

The CSP dichotomy theorem~\cite{bulatov_dichotomy_2017,zhuk_proof_2017,zhuk_proof_2020} states that for every constraint language $\Gamma$, $\CSP(\Gamma)$ is in P or NP-complete. To tame the vast landscape of constraint languages, it was immensely helpful to realize that various ad hoc `gadget' complexity reductions share a common explanation using the notion of \emph{primitive positive} (\emph{pp-}) \emph{definability} (i.e., the usual first-order logic definability restricted to $\{\exists,\wedge,=\}$-formulas)~\cite{jeavons_closure_1997,jeavons_constraints_1998} and the more general notions of \emph{pp-interpretability} and \emph{pp-constructibility}~\cite{barto_reflections_2018}. For example, assuming $\mathrm{P}\neq\mathrm{NP}$, the CSP dichotomy theorem implies that $\CSP(\Gamma)$ is NP-complete if and only if $\Gamma$ pp-constructs \emph{every} finite constraint language. Moreover, pp-definability and its generalizations have an external characterization via the so-called \emph{polymorphisms} (`multivariate homomorphisms')~\cite{geiger_closed_1968,bodnarcuk_galois_1969,jeavons_algebraic_1998}. For an introduction to the area, we refer the reader to the survey~\cite{barto_polymorphisms_2017}.

Constraint languages $\Gamma$ for which $\CSP(\Gamma)$ is solvable by a certain algorithmic approach involving computing with \emph{compact representations} of solution sets (generalizing \emph{bases} of vector spaces or \emph{strong generating sets} over groups from the Schreier-Sims algorithm \cite{sims_computational_1970}) were characterized in~\cite{berman_varieties_2010,idziak_tractability_2010} as those that have \emph{few subpowers}, that is, the number of $n$-ary relations pp-definable from $\Gamma$ is bounded by $2^{p(n)}$ for some polynomial $p(n)$. This property, also called \emph{polynomial expressiveness} \cite{chen_expressive_2005}, is equivalent to having either of the following two properties, where \emph{small} means of size bounded by a polynomial in the arity $n$:
\begin{itemize}
    \item \emph{small generating sets}, i.e., every relation pp-definable from $\Gamma$ has a small subset that is not contained in any proper pp-definable subset,
    \item \emph{small independent sets}, i.e., sets of tuples such that every tuple can be separated from the remaining tuples by a pp-formula, are small.
\end{itemize}
The equivalence of these properties, inspired by basic linear algebra, was established in~\cite[Proposition 1.4]{berman_varieties_2010}.

In this paper, we study another measure of `smallness' of a constraint language: We say that a constraint language $\Gamma$ has \emph{short pp-definitions}, if every $n$-ary relation pp-definable from $\Gamma$ is definable by some primitive positive formula, whose lenght is polynomial in $n$. Examples include constraint languages encoding 2-SAT or the consistency of linear systems over $\mathbb Z_p$.

Note that there are $2^{O(p(n))}$ formulas of length at most $p(n)$, which implies that constraint language with short pp-definitions have few subpowers. We conjecture that also the converse is also true, and thus the two properties are equivalent:

\begin{conjecture}\label{conjecture:short-definitions-iff-few-subpowers}
    A constraint language has short pp-definitions if and only if it has few subpowers.
\end{conjecture}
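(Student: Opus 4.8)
I would first dispose of the direction ``short pp-definitions $\Rightarrow$ few subpowers'' by the cardinality argument alluded to above. Over a fixed finite signature, a pp-formula of length at most $p(n)$ involves at most $p(n)$ symbols and mentions at most $p(n)$ variables, so up to renaming of bound variables there are at most $2^{O(p(n)\log p(n))} = 2^{q(n)}$ such formulas with $n$ free variables, for a suitable polynomial $q$. Since every $n$-ary relation pp-definable from $\Gamma$ is then defined by one of them, the number of such relations is at most $2^{q(n)}$, i.e.\ $\Gamma$ has few subpowers.

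For the converse, ``few subpowers $\Rightarrow$ short pp-definitions'', the plan is to use the structure theory of algebras with few subpowers. Recall that $\Gamma$ has few subpowers iff its polymorphism algebra $\algA = (A;\Pol(\Gamma))$ has an edge term (equivalently a cube term), and consequently every relation in the relational clone $\langle\Gamma\rangle = \Inv(\Pol(\Gamma))$ has the parallelogram property, admits a compact representation, and in particular has a small generating set~\cite{berman_varieties_2010,idziak_tractability_2010}. I would then aim to show: (i) every $n$-ary $R \in \langle\Gamma\rangle$ is logically equivalent to a conjunction of at most polynomially many ``irreducible'' relations lying above it --- for instance the critical relations of $\langle\Gamma\rangle$ containing $R$, exploiting the description of critical relations in the presence of an edge term; and (ii) every such irreducible relation $C$ of arity $n$ has a pp-definition of polynomial length. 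For (ii) the natural attempt is induction on the arity: use the parallelogram property to ``peel off'' a coordinate, expressing $C$ through its projection $\proj_{1,\dots,n-1} C$ (short by induction) together with a small amount of ``linking'' data that can be realized by polynomially many additional conjuncts over $\Gamma$, and verify that the length grows only additively at each step, so the final formula stays of polynomial length.

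The step I expect to be the genuine obstacle is (ii): controlling the length of definitions of high-arity irreducible relations. Such relations need not decompose into pieces of bounded arity --- the solution set of a single equation $\sum_{i} a_i x_i = b$ over $\mathbb{Z}_p$ is irreducible of full essential arity $n$, and although it does possess a short definition, the inductive ``peel a coordinate'' strategy succeeds there only because its linking data happens to be of constant size. For general few-subpowers algebras the linking structure arising from the parallelogram property can be considerably more intricate, and it is precisely at this point that a complete proof seems to require additional hypotheses; accordingly I would be content to prove the conjecture for a subclass on which this induction provably goes through --- one chosen so that, via the explicit understanding of few-subpowers algebras on small domains, it captures all constraint languages on three-element domains. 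Finally I would record the consequence for the subpower membership problem: if $R = \Sg(a_1,\dots,a_k) \subseteq \algA^{\,n}$ has a short pp-definition, then a witness that $b \notin R$ is a polynomial-size pp-formula $\phi$ over $\Gamma$ that holds at every $a_i$ but not at $b$ (the short definition of $R$ itself does the job), and both facts are decidable in polynomial time because $\CSP(\Gamma)$ is tractable; hence a positive answer to Conjecture~\ref{conjecture:short-definitions-iff-few-subpowers} would place $\SMP(\algA)$ in $\coNP$ for every algebra with few subpowers.
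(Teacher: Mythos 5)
Your proposal matches the paper's treatment of this statement. The cardinality argument for ``short pp-definitions $\Rightarrow$ few subpowers'' is exactly the one the paper alludes to, and your two-step plan for the converse --- (i) decompose $R$ into polynomially many critical relations using the fact that, in the presence of a $k$-edge term, critical relations either have arity at most $k$ or the parallelogram property, and (ii) define each critical parallelogram relation by induction on arity, peeling off coordinates --- is precisely the structure of the paper's Lemmas on critical and reduced relations and of the proof of its main theorem. You are also right that (ii) is where the general argument stalls, and that the conjecture remains open: the paper proves it only for a subclass. The hypothesis the paper uses to make the peeling induction go through is that $\HSP(\algA)$ is \emph{residually finite}; the mechanism is that the linkedness congruence $\theta_{\{1,2\}}$ of a reduced critical parallelogram relation is meet-irreducible, so the glued quotient $\proj_{\{1,2\}}(R)/\theta_{\{1,2\}}$ is subdirectly irreducible and therefore drawn from a fixed finite list of algebras inside $\HS(\algA^l)$ for some fixed $l$ --- which is exactly what keeps your ``linking data'' realizable by a single ternary relation from a fixed finite basis at each step, so the length grows only additively. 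Residual finiteness holds for three-element congruence-modular varieties, which yields the three-element case you aimed for. Your closing remark on $\SMP(\algA)\in\NP\cap\coNP$ likewise agrees with the paper.
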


Let us make two remarks here: By \cite[Theorem 3.12]{berman_varieties_2010} every constraint languages $\Gamma$ that does not have few subpowers, has $2^{2^{\Omega(n)}}$ many pp-definable relations of arity $n$. Therefore every such $\Gamma$ requires exponential-length pp-definitions, and the `only if' direction of Conjecture \ref{conjecture:short-definitions-iff-few-subpowers} holds.

Second, it can be easily seen that Conjecture \ref{conjecture:short-definitions-iff-few-subpowers} is true in the Boolean case, i.e., for constraint languages on a two-element domain. This fact was first stated in~\cite[Corollary~1]{lagerkvist_polynomially_2014} where the authors studied an equivalent condition (namely definability by pp-formulas with polynomially many existential quantifiers) for Boolean constraint languages, under the name \emph{polynomial closedness}. The reason is that Boolean constraint languages with few subpowers fall into one of two well-behaved types, which we will demonstrate in Examples~\ref{example:lin} and~\ref{example:2SAT}.

The main result of our paper, stated in Theorem~\ref{theorem:main-result}, confirms the conjecture for a substantial class of constraint languages, namely those whose \emph{polymorphism algebra} generates a \emph{residually finite variety}. This, in particular, implies that Conjecture~\ref{conjecture:short-definitions-iff-few-subpowers} also holds for constraint languages on three-element domains (Corollary \ref{cor:3}). The proof proceeds by first showing that it is enough to consider so-called \emph{critical} relations, and then employing structural theorems from universal algebra, in a similar fashion as in~\cite{bulatov_subpower_2019}.

Apart from being a natural property from the point of view of constraint satisfaction, in Section \ref{section:SMP} we argue that short pp-definitions have further applications in the study of the \emph{subpower membership problem} $\SMP(\algA)$ over an algebraic structure~$\algA$ (see \cite{kozen_complexity_1977,mayr_subpower_2012,bulatov_subpower_2019}), i.e., the problem of deciding whether a given list of tuples over $\algA$ generates another tuple over $\algA$.

The aforementioned compact representations provide a natural certificate for `Yes'-instances of $\SMP(\algA)$; this fact was used in \cite{bulatov_subpower_2019} to establish that $\SMP(\algA) \in \NP$. We argue that short pp-definitions can serve as a natural certificate for `No'-instances. In particular, we show how short pp-definitions impose an upper complexity bound of $\coNP$ on the subpower membership problem. Thus, Conjecture \ref{conjecture:short-definitions-iff-few-subpowers} would imply that $\SMP(\algA) \in \NP \cap \coNP$, for all algebras $\algA$ with few subpowers. Moreover, efficiently computable short definitions (see Question~\ref{question:ppp}) would put the problem $\SMP(\algA)$ in P.

\subsection*{Organization of the paper}
 
In Section~\ref{section:preliminaries} we give a formal definition of short pp-definitions, examples, and an exposition of related properties. Section~\ref{section:algebra} explains some necessary background from universal algebra. In particular, we introduce notions necessary to state our main result, and show that Conjecture \ref{conjecture:short-definitions-iff-few-subpowers} can be phrased in purely algebraic terms. In Section~\ref{section:main-result} we prove the main result, Theorem~\ref{theorem:main-result}. Section~\ref{section:SMP} introduces the connection to the Subpower Membership Problem. Finally, in Section~\ref{section:discussion}, we discuss open questions and possible directions of further research.

\section{Preliminaries} \label{section:preliminaries}

Let $A$ be a finite set. An $n$-ary relation $R$ on $A$ is any set of $n$-tuples $R\subseteq A^n$. By a \emph{constraint language} on the set $A$ (its \emph{domain}) we mean any finite set $\Gamma=\{R_1,\dots,R_m\}$ of relations on $A$ of arbitrary (but finite) arities.

We denote by $[n]$ the set $\{1,\dots,n\}$. For $\bar a\in A^n$ and any subset of coordinates $I\subseteq [n]$, where $I=\{i_1,\dots,i_k\}$ and $i_1<\dots<i_k$, the \emph{projection} of $\bar a$ to $I$ is the $k$-tuple $\proj_I \bar a=(a_{i_1},\dots,a_{i_k})$. Similarly, the \emph{projection} of $R\subseteq A^n$ to $I$ is the $k$-ary relation $\proj_I R=\{\proj_I\bar a\mid \bar a\in R\}$. For convenience we also write $\proj_i R$ instead of $\proj_{\{i\}} R$. The $i$th variable of $R$ is a \emph{dummy variable} if $R$ does not depend on the $i$th coordinate, i.e., $\bar a \in R$ if and only if $\proj_{[n]\setminus \{i\}} \bar a \in \proj_{[n]\setminus \{i\}} R$, for every $\bar a \in A^n$.

A relation $R$ is \emph{primitive positive definable} (or \emph{pp-definable} for short) from $\Gamma$, if it is definable in first-order logic by a formula using only the relations from $\Gamma$, the identity relation $=_A$ on $A$, conjunction, and existential quantification. Equivalently $R$ can be defined by a formula $\phi$ in prenex normal form
$$
\phi(x_1,\dots,x_n) \equiv \exists y_1 \exists y_2\dots \exists y_k \bigwedge_{i\in\{1,\dots,m\}}S_i(z_1^i,\dots,z_{r_i}^i),
$$
where $m\in \N$, every $S_i$ is an $r_i$-ary relational symbol representing a relation from $\Gamma\cup\{=_A\}$ and $z_j^i\in\{x_1,\dots,x_n,y_1,\dots,y_k\}$.

The set of all relations pp-definable from $\Gamma$, denoted by $\langle\Gamma\rangle$, forms a \emph{relational clone}, i.e., a set of relations on $A$ containing the identity relation and closed under intersections, direct products, projections to subsets of coordinates, and permutations of coordinates. Any constraint language $\Gamma$ that generates a relational clone $\mathcal R = \langle\Gamma\rangle$ is called a \emph{relational basis} of $\mathcal R$. Let us denote by $\langle\Gamma\rangle_n$ the set of all $n$-ary relations pp-definable from $\Gamma$. 

Note that the constraint satisfaction problem $\CSP(\Gamma)$ can be viewed as the problem of deciding the primitive positive fragment of the first-order theory of the relational structure $\mathbb A=(A;\Gamma)$. Relations $R\in\langle\Gamma\rangle$ then correspond to projections of solution sets of instances of $\CSP(\Gamma)$.

But the usefulness of pp-definability for the CSP goes much further than that. A key observation is summarized in the following theorem going back to \cite{jeavons_closure_1997}. 

\begin{theorem}
    If $\Gamma$ and $\Delta$ are constraint languages such that $\Delta\subseteq\langle\Gamma\rangle$, then there is a logspace reduction from $\CSP(\Delta)$ to $\CSP(\Gamma)$.
\end{theorem}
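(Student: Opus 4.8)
The plan is to carry out the classical gadget-replacement reduction. First I would fix, once and for all, for each of the finitely many relations $R\in\Delta$, a pp-definition of $R$ from $\Gamma$ in prenex normal form $R(x_1,\dots,x_n)\leftrightarrow \exists y_1\cdots\exists y_k\bigwedge_i S_i(\dots)$; since $\Delta$ is finite, this is a bounded amount of data that can be hard-wired into the reduction. Given an instance $I$ of $\CSP(\Delta)$ with variable set $V$ and constraint list $C$, I would construct an instance $I'$ of $\CSP(\Gamma)$ as follows: start with the variables of $V$, and for each constraint $c=R(u_1,\dots,u_n)$ of $C$ take the chosen pp-definition of $R$, replace its free variable $x_j$ by $u_j$ and its bound variable $y_\ell$ by a fresh variable $y_\ell^c$ (whose name is derived from the indices of $c$ and $\ell$), and add every resulting atom to $I'$. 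Atoms whose relational symbol is $=_A$ are not legal constraints of $\CSP(\Gamma)$, so they must be eliminated: form the graph on the variable set of $I'$ whose edges are exactly these equality atoms, compute its connected components, choose a representative in each, and substitute each variable by its representative throughout the non-equality atoms. Computing connected components of an undirected graph lies in logarithmic space (Reingold's theorem, $\mathrm{L}=\mathrm{SL}$), and the remaining steps only involve counting variables and constraints and copying bounded-size templates; hence the whole construction is computable in logspace.

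For correctness I would verify both implications. If $s\colon V\to A$ is a solution of $I$, then for each $c=R(u_1,\dots,u_n)$ the tuple $(s(u_1),\dots,s(u_n))$ belongs to $R$, so the pp-formula defining $R$ admits a witnessing assignment of its bound variables; setting $y_\ell^c$ to the corresponding witness value extends $s$ to a solution of $I'$ (the equality atoms are respected because $s$ respects them, so identified variables consistently receive a single value). Conversely, any solution of $I'$, restricted to $V$ and composed with the variable identification, satisfies every original constraint of $I$: the existential witnesses are read off from the $y^c$-values, and each conjunct of each pp-definition holds by assumption. Therefore $I$ and $I'$ are equivalent instances, and $I\mapsto I'$ is the desired reduction.

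The only genuine obstacle is the logspace accounting, and within it the elimination of equality atoms: a naive union–find pass would use more than logarithmic memory, so one really does need undirected reachability in $\mathrm{L}$. (If one is content with a polynomial-time, or even nondeterministic-logspace, reduction — which already suffices for all the complexity-classification consequences used later in the paper — this subtlety evaporates.) Every other part of the construction and its verification is an entirely local, routine transformation.
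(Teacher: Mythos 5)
Your proof is correct, and it is precisely the standard gadget-replacement argument that the paper itself does not spell out but only cites (the result goes back to Jeavons et al.\ and is proved in the surveys referenced there). The one genuinely nontrivial point for the logspace bound --- eliminating the $=_A$ atoms by contracting connected components of the equality graph, which needs undirected reachability in $\mathrm{L}$ via Reingold's theorem --- is exactly the step you identify and handle correctly, and your correctness verification (including the observation that identified original variables must agree under any solution of $I$, since the forcing equalities trace back to coordinates that are equal in every tuple of the corresponding relation) is sound.
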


The proof is straightforward; the reduction is obtained by replacing each constraint of an instance of $\CSP(\Delta)$ by its pp-definition from $\Gamma$. In this paper, we are interested in the size of this pp-definition.

More generally, also so called \emph{pp-interpretations} and \emph{pp-constructions} induce logspace reduction between CSPs on possibly different domains. For a constraint languages $\Gamma$ on set $A$ and $\Delta$ on set $B$, we say that a partial map $I \colon A^n \to B$ is a \emph{pp-interpretation} of $\Delta$ in $\Gamma$, if the domain $\dom(I) \subseteq A^n$, the kernel $\ker(I) = \{ (\bar x,\bar y) \mid I(\bar x) = I(\bar y) \} \subseteq A^{2n}$, as well as the pre-image $I^{-1}(R)$ of every relation $R \in \Delta$ are all pp-definable in $\Gamma$. We say that $\Delta$ is \emph{pp-interpretable} in $\Gamma$ if there is a pp-interpretation of $\Delta$ in $\Gamma$, and $\Delta$ is  \emph{pp-constructible} from $\Gamma$ if it can be obtained from $\Gamma$ by pp-interpretations and homomorphic equivalence. In both cases there is a logspace reduction from $\CSP(\Delta)$ to $\CSP(\Gamma)$.

We say that $\Gamma$ and $\Delta$ are \emph{pp-bi-interpretable}, if there are pp-interpretations $I_1$ of $\Gamma$ in $\Delta$ and $I_2$ of $\Delta$ in $\Gamma$, such that the function graphs of their compositions $I_2\circ I_1$ and $I_1\circ I_2$ are pp-definable in $\Delta$ and $\Gamma$ respectively. For a broader exposition of these concepts see \cite{barto_polymorphisms_2017} and \cite{barto_reflections_2018}.

\subsection{Few subpowers and short definitions}

In order to put Conjecture~\ref{conjecture:short-definitions-iff-few-subpowers} on a firm footing, let us next formally define the notion of \emph{few subpowers}~\cite{berman_varieties_2010,idziak_tractability_2010} and the central concept of the present paper, \emph{short pp-definitions}.

\begin{definition}
    A constraint language $\Gamma$ has \emph{few subpowers}, if there exists a polynomial $p(n)$ such that $|\langle\Gamma\rangle_n|\leq 2^{p(n)}$ for all $n>0$.
\end{definition}

\begin{definition}\label{definition:short-pp-def}
    Let $\Gamma$ be a constraint language and $f \colon \N \to \N$ a monotone function. We say that $\Gamma$ has:
    \begin{itemize}
        \item \emph{pp-definitions of length $f(n)$}, if for every $n>0$ and every $R\in\langle\Gamma\rangle_n$, $R$ is definable from $\Gamma$ by a primitive positive formula $\phi$ of length $|\phi|\leq f(n)$.
        \item \emph{short pp-definitions}, if $\Gamma$ has pp-definitions of length $p(n)$ for some polynomial $p(n)$.
    \end{itemize}
\end{definition}

Here, we consider the length $|\phi|$ to be simply the number of symbols in some syntactical representation of the formula. In the definition of \emph{short pp-definitions}, one could alternatively impose a polynomial bound on either
\begin{itemize}
    \item the number of atomic subformulas in $\phi$, or
    \item the number of existentially quantified variables of $\phi$.
\end{itemize}
The latter option was used in \cite{lagerkvist_polynomially_2014} in the notion of \emph{polynomial closedness} of $\langle\Gamma\rangle$. Note that, since $\Gamma$ is fixed and finite, the three definitions coincide.

Clearly, having few subpowers is a property of the relational clone $\mathcal R=\langle\Gamma\rangle$, independent of the choice of the relational basis $\Gamma$. We observe that the same is true for short pp-definitions. In fact, up to multiplication by a scalar, this is true for any bound $f(n)$ on the length of pp-definitions:

\begin{lemma} \label{lemma:ppinterdef}
    Let $\Gamma$ and $\Delta$ be constraint languages such that $\langle\Gamma\rangle=\langle\Delta\rangle$. If $\Gamma$ has pp-definitions of length $f(n)$, then $\Delta$ has pp-definitions of length $O(f(n))$. In particular, $\Gamma$ has short pp-definitions if and only if $\Delta$ does.
\end{lemma}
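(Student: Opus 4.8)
The plan is to substitute definitions into each other and carefully track how the formula length grows. Since $\langle\Gamma\rangle = \langle\Delta\rangle$, every relation in $\Delta$ is pp-definable from $\Gamma$, and conversely. Write $\Gamma = \{R_1,\dots,R_m\}$ and $\Delta = \{S_1,\dots,S_\ell\}$. For each $S_j \in \Delta$, fix once and for all a pp-formula $\psi_j$ defining $S_j$ from $\Gamma$; since $\Delta$ and $\Gamma$ are finite and fixed, there is a global constant bounding the length of every $\psi_j$ (and in particular bounding the number of atoms and quantified variables in each $\psi_j$).

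Now take an arbitrary $n$-ary $R \in \langle\Delta\rangle_n = \langle\Gamma\rangle_n$. By hypothesis on $\Gamma$, there is a pp-formula $\phi$ over $\Gamma$ of length $|\phi| \le f(n)$ defining $R$. First I would put $\phi$ into prenex form and note that the number of atoms and the number of existential quantifiers in $\phi$ are each at most $f(n)$. To build a definition of $R$ over $\Delta$, I replace each atom $R_i(\bar z)$ of $\phi$ by a pp-formula over $\Delta$ defining $R_i$ — such a formula exists because $R_i \in \langle\Gamma\rangle = \langle\Delta\rangle$, it has bounded length (again because everything is finite and fixed), and before substituting one renames its bound variables apart from all variables used elsewhere. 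Pulling all the new existential quantifiers to the front yields a prenex pp-formula $\phi'$ over $\Delta$ that still defines $R$ (correctness is the standard fact that substituting equivalent pp-definitions into atoms preserves the defined relation). Each of the at most $f(n)$ atoms of $\phi$ is replaced by a gadget of bounded size, and the quantifier prefix grows by a bounded amount per atom, so $|\phi'| \le c \cdot f(n)$ for a constant $c$ depending only on $\Gamma$ and $\Delta$ (concretely, on $\max_i$ of the length of the chosen $\Delta$-definition of $R_i$, plus overhead for renaming and re-prenexing). This gives the first assertion; the "in particular" follows by taking $f$ to be a polynomial and observing that $c \cdot p(n)$ is again a polynomial, and symmetrically with the roles of $\Gamma$ and $\Delta$ swapped.

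The only mildly delicate point — and the one I would be most careful to state cleanly rather than the "main obstacle" in any real sense — is the bookkeeping around variable renaming: when the same relation symbol $R_i$ occurs many times in $\phi$, each occurrence must get its own fresh copies of the bound variables of $R_i$'s $\Delta$-definition, and one should check that the cost of these renamings (new variable names, which in a reasonable syntactic encoding may have length growing logarithmically in their number) does not break the linear bound. This is handled by absorbing the $O(\log)$ factor into the constant $c$ or, more simply, by noting that the paper has already observed that bounding length, number of atoms, or number of quantified variables are interchangeable up to the fixed finite data of $\Gamma$; so it suffices to bound the number of atoms and quantified variables of $\phi'$, each of which is clearly at most $c' \cdot f(n)$ with $c'$ the relevant bounded gadget size. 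Everything else is routine syntactic manipulation of pp-formulas.
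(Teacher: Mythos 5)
Your proposal is correct and takes essentially the same route as the paper: substitute a fixed $\Delta$-definition of each relation of $\Gamma$ for every atom of the length-$f(n)$ formula over $\Gamma$, absorbing the bounded gadget size (and renaming overhead) into the constant $c$. (Minor note: the formulas $\psi_j$ defining the relations of $\Delta$ from $\Gamma$ that you fix at the outset are never used; what you need, and correctly invoke later, are the definitions of the relations of $\Gamma$ from $\Delta$.)
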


\begin{proof}
    Let $R\in\langle\Delta\rangle_n=\langle\Gamma\rangle_n$. By assumption, $R$ has a pp-definition $\phi_R$ from $\Gamma$ of length at most $f(n)$. Since $\Gamma\subseteq\langle\Gamma\rangle=\langle\Delta\rangle$, every relation $S\in\Gamma$ can be defined from $\Delta$ by some pp-formula $\psi_S$. Let $c=\max\{|\psi_S|\colon S\in\Gamma\}$. If we replace every atomic subformula $S_i(z_1^i,\dots,z_{r_i}^i)$ in $\phi_R$ by a suitable variant of the formula $\psi_{S_i}$ (renaming variables as needed), we obtain a pp-definition of $R$ from $\Delta$ of length at most $c\cdot f(n)$.
\end{proof}

So, by Lemma \ref{lemma:ppinterdef}, it makes also sense to say that a relational clone $\mathcal R$ with finite relational basis has short pp-definitions (or pp-definitions of length $O(f(n))$), as this property is independent of the choice of the constraint language $\Gamma$ with $\mathcal R = \langle \Gamma \rangle$. We remark that, by the main result of \cite{aichinger_number_2014}, all relational clones with few subpowers have a finite relational basis; however this is not true for relational clones in general.

We further remark that there is no constraint language $\Gamma$ on sets $A$ with $|A|>1$ that has pp-definitions of sub-linear, or even linear length. For this simply note that the equality relation $=$ on $A$ already pp-defines $B_n$-many $n$-ary relations, where $B_n$ denotes the $n$-th Bell number, i.e., the number of possible partitions of a set of size $n$. Thus $\Omega(\log (B_n))$ is a lower bound on the length of pp-definitions in any non-trivial constraint language. By an easy bars-and-stars argument, one can see that $B_n\geq 2^{n-1}$, thus pp-definitions require at least linear length. In fact, from the asymptotics $\log (B_n) = n (\log n - \log \log n - 1 + o(1))$ (see e.g. \cite{debruijn_asymptotic_1981}), we can derive the lower bound $\Omega(\log (B_n)) = \Omega(n \log(n))$ for $n > 1$.

We next generalize Lemma \ref{lemma:ppinterdef} to constraint languages on different domains as follows:

\begin{lemma} \label{lemma:ppbiinterpret}
    Let $\Gamma$ and $\Delta$ be a pair of pp-bi-interpretable constraint languages. If $\Gamma$ has pp-definitions of length $f(n)$, then $\Delta$ has pp-definitions of length $O(f(cn))$, for some $c\geq 1$. In particular, $\Gamma$ has short pp-definitions if and only if $\Delta$ does.
\end{lemma}

\begin{proof}
Let $A$ be the domain of $\Gamma$ and $B$ the domain of $\Delta$. First note that the statement trivially holds, if $|A| = |B|=1$, as then both have constant length pp-definitions.

In the remaining case, both $|A| > 1$ and $|B|>1$ must hold, as otherwise $\Gamma$ and $\Delta$ could not be pp-bi-interpretable. Let $I_1 \colon A^{k_1} \to B$ be a pp-interpretation of $\Delta$ in $\Gamma$ and $I_2 \colon B^{k_2} \to A$ a pp-interpretation of $\Gamma$ in $\Delta$, such that the function graphs of $I_1\circ I_2 \colon B^{k_2 k_1} \to B$ and $I_2\circ I_1 \colon A^{k_1 k_2} \to A$ are pp-definable in $\Delta$ and $\Gamma$ respectively.

Let $R \in \langle\Delta\rangle_n$. The $k_1n$-ary relation $I_1^{-1}(R)$ is an element of $\langle \Gamma \rangle_{k_1n}$, and has therefore a pp-definition $\phi$ from $\Gamma$ of length at most $f(k_1n)$. 

Also, since $I_2$ is a pp-interpretation, we know that $\dom(I_2) \subseteq A^{k_2}$ and $I_2^{-1}(S)$ for every relation $S\in\Gamma \cup \{=_A\}$ have pp-definitions in $\Delta$. Thus, we obtain a pp-definition $\psi$ of $I_2^{-1}I_1^{-1}(R)$ in $\Delta$, by substituting every variable $z$ in $\phi$ by a tuple of $k_2$ many variables $\bar z = (z_1,\ldots,z_{k_2}) \in \dom(I_2)$, and every atomic subformula $S(z_1,\dots,z_{r_i})$ of $\phi$, by the corresponding pp-definition of $I_2^{-1}(S)(\bar z_1,\dots,\bar z_{r_i})$. Note that $\psi$ is of length $O (k_2f(k_1n)) = O (f(k_1n))$. 

A pp-definition of $R(x_1,\ldots,x_n)$ can then be obtained from 
$$
\exists \bar y_1,\ldots,\bar y_{n} (\psi(\bar y_1,\ldots,\bar y_{n}) \land \bigwedge_{i=1}^n I_2(I_1(\bar y_i)) = x_i),
$$ 
which is of length $O (f(k_1n) + dn)$, where $d$ is the length of the pp-definition of the function graph of $I_2\circ I_1$. Since $f(n) \in \Omega (n)$ we get $O (f(k_1n) + dn) = O (f(k_1n))$, which finishes the proof of the first statement.

The statement about short pp-definitions immediately follows from applying the result to polynomial growth functions $f(n)$.
\end{proof}

We remark that a similar statement does not hold for structures that pp-construct each other, as it is an easy exercise to construct a constraint language without few subpowers that is homomorphically equivalent to $\Gamma_{\mathrm{Lin}}$, a constraint language we prove to have short pp-definitions in Example \ref{example:2SAT}.

\subsection{Polymorphisms}

Central to the algebraic approach to the CSP is the idea that constraint languages up to pp-definability (that is, relational clones) can be characterized by their \emph{polymorphisms}. Following the terminology from \cite{barto_polymorphisms_2017}, a $k$-ary operation $f:A^k\to A$ is \emph{compatible} with an $n$-ary relation $R\subseteq A^n$, and $R$ is \emph{invariant} under $f$, if for all tuples $\bar a_1,\bar a_2\ldots,\bar a_k \in R$ also $f(\bar a_1,\bar a_2\ldots,\bar a_k) \in R$, where the image $f(\bar a_1,\bar a_2\ldots,\bar a_k) \in A^n$ is computed coordinate-wise. A \emph{polymorphism} of a constraint language $\Gamma$ is then any function on the domain that is compatible with all relations from $\Gamma$. As is usual, we write $\Pol(\Gamma)$ to denote the set of all polymorphisms of $\Gamma$ and, similarly, $\Inv(\mathcal F)$ for the set of all relations on the domain $A$ invariant under a set of operations $\mathcal F$. It is well known that $\Pol(\Gamma)$ forms an algebraic object called a \emph{clone}, i.e. it is closed under composition, and contains all projections. The key connection between polymorphisms and pp-definability can be summarized in the following lemma.

\begin{lemma}[\cite{geiger_closed_1968,bodnarcuk_galois_1969,jeavons_algebraic_1998}]
    For any constraint language $\Gamma$, $\langle\Gamma\rangle=\Inv(\Pol(\Gamma))$.
\end{lemma}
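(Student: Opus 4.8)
The statement is the classical $\Pol$--$\Inv$ Galois correspondence \cite{geiger_closed_1968,bodnarcuk_galois_1969}, so the plan is to reproduce the standard two-inclusion argument. For $\langle\Gamma\rangle\subseteq\Inv(\Pol(\Gamma))$ I would simply check that $\Inv(\Pol(\Gamma))$ is itself a relational clone containing $\Gamma$: it contains $=_A$ (preserved by every operation), and it is closed under intersections, direct products and permutations of coordinates by immediate verifications, and under projections by lifting witnessing tuples along the deleted coordinates and projecting back. Since $\langle\Gamma\rangle$ is by definition the smallest relational clone containing $\Gamma$, the inclusion follows.

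The substance is in $\Inv(\Pol(\Gamma))\subseteq\langle\Gamma\rangle$. Fix an $n$-ary $R\in\Inv(\Pol(\Gamma))$; I may assume $R\neq\emptyset$, the empty relation being a boundary case handled by the usual conventions. Enumerate $R=\{r_1,\dots,r_m\}$ and read these tuples as the rows of an $m\times n$ matrix over $A$, with columns $c_1,\dots,c_n\in A^m$. The plan is to write down the \emph{indicator} (or canonical) pp-formula of $R$: introduce one variable $y_b$ for every $b\in A^m$; for each $T\in\Gamma$, of arity $k$ say, and each tuple $(b^1,\dots,b^k)\in(A^m)^k$ with $(b^1[j],\dots,b^k[j])\in T$ for all $j\in\{1,\dots,m\}$, include the conjunct $T(y_{b^1},\dots,y_{b^k})$; finally, identify $y_{c_i}$ with the free variable $x_i$ and existentially quantify the remaining $y_b$. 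Call the resulting pp-formula (over $\Gamma\cup\{=_A\}$) $\phi(x_1,\dots,x_n)$. Informally, $\phi$ is the canonical query of the $m$-th direct power of $\Gamma$, so its solution set is $\{\,(g(c_1),\dots,g(c_n)) : g\in\Pol(\Gamma)\text{ of arity }m\,\}$, i.e.\ the set of tuples obtained by applying an $m$-ary polymorphism coordinatewise to the rows $r_1,\dots,r_m$.

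It then remains to verify $\phi(A)=R$, which splits cleanly into two inclusions. For $R\subseteq\phi(A)$: the assignment $y_b:=b[j]$ satisfies every conjunct of $\phi$, precisely because those conjuncts were selected so that their ``rows'' lie in $T$, and it realises the tuple $r_j$; this half uses nothing about invariance. For $\phi(A)\subseteq R$: a satisfying assignment witnessing $(a_1,\dots,a_n)\in\phi(A)$ is a map $g\colon A^m\to A$, $b\mapsto y_b$, with $g(c_i)=a_i$, and unwinding the definition of compatibility one sees that the conjuncts of $\phi$ assert exactly that $g$ is compatible with each $T\in\Gamma$; hence $g\in\Pol(\Gamma)$, so $(a_1,\dots,a_n)=(g(c_1),\dots,g(c_n))$ is $g$ applied coordinatewise to $r_1,\dots,r_m\in R$ and therefore lies in $R$ by invariance of $R$. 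This gives $\phi(A)=R$, so $R\in\langle\Gamma\rangle$.

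The only real obstacle is the second inclusion: hitting on the indicator construction and checking that an arbitrary satisfying assignment of $\phi$ is genuinely a polymorphism; the first inclusion and the two containments above are routine bookkeeping. I would also point out --- with an eye to the rest of the paper --- that the $\phi$ produced here has $|A|^m\le|A|^{|A|^n}$ existential variables, so this argument only delivers pp-definitions of doubly exponential length, and squeezing that down to polynomial length is exactly the question taken up in this paper.
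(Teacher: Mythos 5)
Your proof is correct, but note that the paper does not prove this lemma at all --- it is quoted as a classical result with citations to Geiger, Bodnarcuk et al.\ and Jeavons et al., so there is no in-paper argument to compare against. What you give is the standard proof: the easy inclusion via the observation that $\Inv(\Pol(\Gamma))$ is a relational clone containing $\Gamma$, and the substantive inclusion via the indicator (canonical) pp-formula of the $m$-th power, whose satisfying assignments are exactly the $m$-ary polymorphisms applied to the rows of $R$. Both directions check out, and your two side remarks are apt: the empty relation genuinely requires a convention (it is always invariant but need not be pp-definable, e.g.\ when $\Pol(\Gamma)$ contains a constant), and the construction yields a formula with up to $|A|^{|A|^n}$ quantified variables, which is precisely the doubly-exponential blow-up that motivates the paper's notion of short pp-definitions.
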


Constraint languages with few subpowers can be characterized by the existence of an \emph{edge polymorphism}, that is, a polymorphism satisfying certain algebraic identities (under all evaluations of variables in the domain).

\begin{theorem}[\cite{berman_varieties_2010, idziak_tractability_2010}]\label{theorem:few_subpowers_iff_edge_term}
    A constraint language $\Gamma$ has few subpowers, if and only if for some $k\geq 2$ there exists a \emph{$k$-edge polymorphism} $e\in\Pol(\Gamma)$, that is, a $(k+1)$-ary operation $e\colon A^{k+1}\to A$ satisfying the following identities:
    \begin{align*}
        e(y,y,x,x,x,\dots,x) &\approx x\\
        e(y,x,y,x,x,\dots,x) &\approx x\\
        e(x,x,x,y,x,\dots,x) &\approx x\\
        e(x,x,x,x,y,\dots,x) &\approx x\\
        &\vdots\\
        e(x,x,x,x,x,\dots,y) &\approx x
    \end{align*}
    In that case, the problem $\CSP(\Gamma)$ is in $\mathrm{P}$.
\end{theorem}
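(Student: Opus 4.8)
The plan is to establish three implications: (a)~a $k$-edge polymorphism forces few subpowers; (b)~few subpowers forces the existence of a $k$-edge polymorphism for some $k$; and (c)~a $k$-edge polymorphism yields a polynomial-time algorithm for $\CSP(\Gamma)$.

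For (a) I would show that, in the presence of a $k$-edge polymorphism $e$, every $R\in\langle\Gamma\rangle_n$ has a generating set of size bounded by a fixed polynomial $q(n)$ (depending only on $|A|$ and $k$). Such a generating set is extracted via a \emph{signature} of $R$: scanning the coordinates $1,\dots,n$ and recording, at each coordinate where $R$ can ``fork'', a constant number of witnessing tuples from $R$; a short counting argument bounds the number of relevant forks by $q(n)$. The substantive point is a reconstruction lemma stating that these witnesses already generate $R$: given $\bar t\in R$, one builds $\bar t$ inside the subalgebra generated by the witnesses by repeatedly applying $e$ to patch two tuples that agree on an initial segment of coordinates into a new tuple agreeing on a strictly longer segment; the edge identities are precisely what guarantees the patched tuple lies in $R$ and pins down where it still differs, and a minimal-counterexample argument on the length of the matched prefix then closes the loop. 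With small generating sets in hand the count is immediate: every $R\in\langle\Gamma\rangle_n$ equals $\Sg(X)$ for some $X\subseteq A^n$ with $|X|\le q(n)$, and there are at most $|A|^{n\cdot q(n)}=2^{p(n)}$ such sets $X$.

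Direction (b) is the hard one, and here I would argue by contraposition, invoking structural input from universal algebra rather than reproving it. Passing to the polymorphism algebra $\algA=(A;\Pol(\Gamma))$, the existence of a $k$-edge term is equivalent to membership of one explicit tuple in the subalgebra of $\algA^{|A|^{k+1}}$ freely generated by the $k+1$ projection tuples, restricted to the finitely many coordinates pinned down by the edge identities. If no such term exists for any $k$, then $\algA$ possesses a ``cube-term blocker'', a small pair of subsets $C\subsetneq B\subseteq A$ obstructing all such terms; from such a blocker one can encode, for every monotone Boolean function in $m$ variables, a distinct subpower of $\algA^{O(m)}$, so that $|\langle\Gamma\rangle_n|$ grows doubly exponentially in $n$ and few subpowers fails. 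Locating the blocker and carrying out this encoding is the main obstacle, and I expect to cite~\cite{berman_varieties_2010,idziak_tractability_2010} for it.

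For (c), given the $k$-edge polymorphism I would run the standard few-subpowers algorithm: in a single pass over the input constraints, maintain a compact representation --- a generating set of size at most $q(|V|)$, as in part~(a) --- of the solution set of the constraints processed so far, which is always a subpower of $\algA^{V}$ for the variable set $V$; processing a new constraint amounts to intersecting with a subpower and recomputing a small generating set of the intersection, which takes polynomial time since every generating set that occurs stays small and $e$ lets one close off efficiently. The instance is satisfiable if and only if the final generating set is nonempty, so this decides $\CSP(\Gamma)$ in polynomial time; both correctness and the running-time bound reduce to the small-generating-set machinery of part~(a).
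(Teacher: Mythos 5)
The paper does not prove this theorem at all: it is imported as a black box from \cite{berman_varieties_2010,idziak_tractability_2010}, so there is no in-paper argument to compare yours against. Your three-part architecture --- compact representations give the counting bound and the algorithm, and the hard converse goes by contraposition through a blocker configuration yielding doubly exponentially many subpowers --- is faithful to the proofs in those references, and deferring the existence and exploitation of the blocker to the literature is no worse than what the paper itself does.

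There is, however, a genuine gap in the reconstruction lemma of part (a). The edge identities leave $e(y,x,x,\dots,x)$ (first argument deviating, all others equal) completely unconstrained, so a single application of $e$ to tuples that agree on an initial segment need not preserve that segment; the prefix-patching induction you describe works verbatim only when $e$ restricted to its first three arguments is Mal'tsev. For a general $k$-edge term the fork witnesses for $(i,a,b)\in\Sig(R)$ do not generate $R$: a compact representation must additionally contain tuples realizing every element of the projections of $R$ onto all coordinate sets of size less than $k$, and the reconstruction interleaves two operations derived from $e$ (a pseudo-Mal'tsev part and a pseudo-near-unanimity part). This is the same dichotomy the paper exploits in Lemma~\ref{lemma:critical}: critical relations either have arity at most $k$ or the parallelogram property, and signature data only recovers the parallelogram part. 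The corrected representation has size $O(n^{k-1})$ rather than the $O(n)$ your fork count suggests --- still polynomial, so your counting conclusion $2^{p(n)}$ and the algorithm in (c) survive once the representation is enlarged accordingly.
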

The following two special cases were important intermediate steps towards Theorem~\ref{theorem:few_subpowers_iff_edge_term} and, in particular, cover all Boolean constraint languages with few subpowers:
\begin{itemize}
    \item A \emph{Mal'tsev} operation is a ternary operation $m\colon A^3\to A$ satisfying the identities $m(x,x,y)\approx m(y,x,x)\approx y$. If $m$ is a Mal'tsev operation, then $e(x_1,x_2,x_3)=m(x_2,x_1,x_3)$ is a 2-edge term.
    \item A \emph{near-unanimity} operation (of arity $k\geq 3$) is an operation $t$ satisfying the following identities:
    $$
    x\approx t(y,x,\dots,x)\approx t(x,y,x,\dots,x)\approx t(x,\dots,x,y,x)\approx t(x,\dots,x,y)
    $$ 
    Then $e(x_1,x_2,\dots,x_{k+1})=t(x_2,\dots,x_{k+1})$ is a $k$-edge term. A ternary near-unanimity is called a \emph{majority}.
\end{itemize}

\subsection{Two canonical examples}

Let us now give two examples of constraint languages with short pp-definitions; Example~\ref{example:lin} is invariant under a Mal'tsev operation, while Example~\ref{example:2SAT} admits a majority polymorphism.

\begin{example} \label{example:lin}
    The problem of checking consistency of a linear system over $\mathbb Z_2$ can be encoded as $\CSP(\Gamma_{\mathrm{Lin}})$ for $\Gamma_{\mathrm{Lin}}=\{R_{\mathrm{Lin}},C_0,C_1\}$ where 
    $$
    R_{\mathrm{Lin}}=\{(0,0,0),(0,1,1),(1,0,1),(1,1,0)\}
    $$ 
    consists of all triples satisfying the linear equation `$x_1+x_2=x_3$', $C_0=\{0\}$, and $C_1=\{1\}$. Indeed, any linear equation $x_1+x_2+\dots+x_n=b$ for $n>1$ can be encoded using auxiliary variables $y_1,\dots,y_{n-1}$ and three-variable equations
    $$
    x_1+x_2=y_1,\quad y_1+x_3=y_2,\quad \dots\quad y_{n-2}+x_n=y_{n-1},\quad y_{n-1}=b
    $$
    thus providing a pp-definition:
   \begin{equation}
    \exists y_1 \dots \exists y_{n-1} \left( R_{\mathrm{Lin}}(x_1,x_2,y_1)\wedge\dots\wedge R_{\mathrm{Lin}}(y_{n-2},x_n,y_{n-1})\wedge C_b(y_{n-1}) \right) \label{eq:linformula}
\end{equation}
    (For $n=1$, the equation $x_1=b$ is pp-defined simply by $C_b(x_1)$.)

    Thus the relational clone $\langle\Gamma_{\mathrm{Lin}}\rangle$ consists of all relations definable via systems of linear equations, that is, all affine subspaces of $\mathbb Z_2^n$, for any $n>0$. It is well-known that every affine subspace $R$ of $\mathbb Z_2^n$ can be described by at most $n$ linear equations. The conjunction of the corresponding pp-formulas of the form \eqref{eq:linformula} clearly defines $R$ and is of length $O(n^2)$. Thus $\Gamma_{\mathrm{Lin}}$ has pp-definitions of length $O(n^2)$.
\end{example}

The constraint language $\Gamma_{\mathrm{Lin}}$ is arguably one of the easiest examples of a constraint language with a Mal'tsev polymorphism. Indeed, $\langle\Gamma_{\mathrm{Lin}}\rangle=\Inv(\{m\})$ for the Mal'tsev operation $m(x,y,z)=x-y+z \bmod 2$.

While Conjecture~\ref{conjecture:short-definitions-iff-few-subpowers} is open even under the presence of a Mal'tsev polymorphism, the above example can be generalized to constraint languages with a \emph{central} Mal'tsev polymorphism. This means that the Mal'tsev polymorphism $m$ is compatible with its own function graph, i.e., the 4-ary relation $R=\{(x,y,z,m(x,y,z))\mid x,y,z\in A\}$ is also pp-definable in $\Gamma$. In this case we know by \cite{gumm_algebras_1979} that the corresponding polymorphism clone $\Pol(\Gamma)$ is \emph{affine}, i.e., up to composition with constants it consists of all affine operations over a module. In particular, then $m(x,y,z) = x-y+z$, where $+$ is the addition in the underlying module. For such affine cases we get the following generalization of Example \ref{example:lin}:

\begin{theorem} \label{theorem:affine}
    Let $\Gamma$ be a constraint language such that $\Pol(\Gamma)$ is affine. Then $\Gamma$ has pp-definitions of length $O(n^2)$.
\end{theorem}

We remark that, at least for domains of prime size $p$, Theorem \ref{theorem:affine} can be proved analogously to Example \ref{example:lin}, since then $\langle \Gamma \rangle$ can be easily described as a nice subset of the relational clone of all affine subspaces of powers of $\Z_p$ (see e.g. \cite{bagyinski_lattice_1982}). But, as Theorem \ref{theorem:affine} is also a corollary of our main result (Theorem \ref{theorem:main-result}) we will not include a full proof of it at the current point.

For the second example, we need the following characterization of relations invariant under near-unanimity operations:

\begin{theorem}[\cite{baker_polynomial_1975}]\label{theorem:NU}
    If a relation $R\subseteq A^n$ is invariant under a $(k+1)$-ary near-unanimity operation $t$, then it is pp-definable from its projections to at most $k$-ary subsets of coordinates by the formula
    $$
    \bigwedge_{\substack{I = \{i_1,\ldots,i_l\} \\ I \subseteq [n], |I|\leq k}}\proj_I R(x_{i_1},\dots,x_{i_l}).
    $$
\end{theorem}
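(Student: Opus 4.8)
The plan is to establish the two directions of the biconditional separately. The left-to-right implication is immediate: if $(a_1,\dots,a_n)\in R$ then for every coordinate set $I=\{i_1,\dots,i_l\}\subseteq[n]$ with $|I|\le k$ the tuple $(a_{i_1},\dots,a_{i_l})$ lies in $\proj_I R$ by the very definition of projection, so every conjunct on the right-hand side holds. All the content is therefore in the converse, which is exactly the Baker--Pixley theorem: a tuple $\bar a=(a_1,\dots,a_n)$ whose projection onto every at-most-$k$-element set of coordinates lies in the corresponding projection of $R$ must itself belong to $R$. I would prove this by an induction that successively transfers the property from small coordinate sets to larger ones, using only the near-unanimity identities of $t$ and the invariance of $R$ under $t$.

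In detail, I would prove by induction on $m$, ranging over $k\le m\le n$, the claim that $\proj_J\bar a\in\proj_J R$ for every $J\subseteq[n]$ with $|J|=m$. The base case $m=k$ is precisely the hypothesis of the theorem (and the smaller coordinate sets contribute nothing new), while the case $m=n$ is the desired conclusion $\bar a\in R$. For the inductive step, fix $J=\{j_1,\dots,j_{m+1}\}$ with $m\ge k$, so that $|J|\ge k+1$; for each $s\in\{1,\dots,k+1\}$ set $J_s=J\setminus\{j_s\}$, a set of size $m$, and use the inductive hypothesis to pick $b^{(s)}\in R$ with $b^{(s)}_j=a_j$ for all $j\in J_s$. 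Applying the $(k+1)$-ary near-unanimity operation $t$ coordinatewise to $b^{(1)},\dots,b^{(k+1)}$ and invoking invariance of $R$, the resulting tuple $c=t(b^{(1)},\dots,b^{(k+1)})$ lies in $R$; and for each $j\in J$ one checks $c_j=a_j$, since either $j=j_s$ for some $s\le k+1$, in which case $k$ of the $k+1$ arguments of $t$ at coordinate $j$ equal $a_j$ and the near-unanimity identities force $c_j=a_j$, or $j$ occurs in all of the $J_t$, in which case every argument equals $a_j$ and idempotency of $t$ (a consequence of the identities) gives $c_j=a_j$. Hence $\proj_J\bar a=\proj_J c\in\proj_J R$, which completes the induction.

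The only step requiring a genuine idea — and hence the main obstacle — is this inductive step: one must choose exactly $k+1$ of the size-$m$ subsets of $J$, extract witnesses in $R$ from the inductive hypothesis, and combine them with $t$ so that every coordinate of the combined tuple is pinned to the corresponding value of $\bar a$ by some near-unanimity identity. Everything else is the trivial inclusion and routine bookkeeping with indices; no structural universal algebra beyond the defining identities of the near-unanimity term and the invariance of $R$ is needed.
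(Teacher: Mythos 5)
Your argument is correct: the forward direction is trivial, and your induction on the size of the coordinate set, choosing the $k+1$ witnesses $b^{(s)}$ omitting one coordinate each and combining them with $t$ so that each coordinate is fixed either by a near-unanimity identity or by idempotency, is exactly the standard proof of the Baker--Pixley theorem. The paper cites this result from \cite{baker_polynomial_1975} without proof, so there is nothing to compare beyond noting that your reconstruction is the classical argument and is complete.
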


Thus, if a constraint language $\Gamma$ has a $(k+1)$-ary near-unanimity polymorphism $t$, then every relation $R \in \langle \Gamma \rangle_n$ (for $n>k$) can be written as the conjunction of $\binom{n}{k}$ relations of arity $k$. As a direct corollary of Theorem \ref{theorem:NU} we obtain:

\begin{corollary} \label{corollary:NU}
    Let $\Gamma$ be a constraint language which admits a $(k+1)$-ary near-unanimity polymorphism. Then $\Gamma$ has pp-definitions of length $O(n^k)$.
\end{corollary}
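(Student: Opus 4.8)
The plan is to apply Theorem~\ref{theorem:NU} essentially verbatim; the only subtlety is that the projections appearing in its statement lie in $\langle\Gamma\rangle$ rather than in $\Gamma$ itself, and this discrepancy will be absorbed into the constant $c$ because $A$, $\Gamma$, and $k$ are all fixed.

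First I would fix $n>0$ and $R\in\langle\Gamma\rangle_n$ and dispose of the case $n\le k$ separately: there are only finitely many relations of arity at most $k$ on $A$, each one lying in $\langle\Gamma\rangle$ has some pp-definition over $\Gamma$, and the maximum length among these is a constant, so $R$ has a pp-definition of constant (hence $O(n^k)$) length. For the main case $n>k$, let $t$ be a $(k+1)$-ary near-unanimity polymorphism of $\Gamma$. By Theorem~\ref{theorem:NU}, $R$ is defined by the conjunction $\bigwedge_{I\subseteq[n],\,|I|\le k}\proj_I R(x_{i_1},\dots,x_{i_l})$, whose number of conjuncts is $\sum_{j=0}^{k}\binom{n}{j}\le(k+1)\,n^k$.

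Next I would turn this into a genuine pp-formula over $\Gamma$. Each $\proj_I R$ has arity $|I|\le k$ and belongs to $\langle\Gamma\rangle$, since relational clones are closed under taking projections. As there are at most $\sum_{j=0}^{k}2^{|A|^j}$ relations of arity at most $k$ on $A$, I can fix once and for all a pp-definition $\psi_S$ over $\Gamma$ for every such relation $S\in\langle\Gamma\rangle$ and set $d:=\max_S|\psi_S|$, a constant depending only on $\Gamma$ and $k$. Replacing, in the formula above, each atomic conjunct $\proj_I R(x_{i_1},\dots,x_{i_l})$ by a copy of $\psi_{\proj_I R}$ with free variables renamed to $x_{i_1},\dots,x_{i_l}$ and existentially quantified variables renamed to fresh ones (and pulling the new quantifiers to the front) yields a pp-definition of $R$ over $\Gamma$ of length at most $d\cdot(k+1)\,n^k$. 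Taking $c:=d\cdot(k+1)$, enlarged if necessary to cover the finitely many cases with $n\le k$, proves the corollary.

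I do not expect any real obstacle: the corollary is a direct consequence of Theorem~\ref{theorem:NU}, and the only thing requiring care is the routine bookkeeping above, namely that the finitely many arity-$\le k$ building-block relations each contribute only a constant to the total length. (If one charges $O(\log n)$ symbols per variable index in the chosen syntactic representation, the bound degrades to $O(n^k\log n)$, which still witnesses short pp-definitions and is subsumed by $c\cdot n^{k+1}$; under the convention that each variable occurrence counts as a single symbol, the stated $c\cdot n^k$ bound holds as is.)
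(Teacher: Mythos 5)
Your proof is correct and follows the same route the paper intends: apply Theorem~\ref{theorem:NU} to write $R$ as a conjunction of $O(n^k)$ projections of arity at most $k$, and then substitute a fixed, constant-length pp-definition over $\Gamma$ for each of the finitely many possible low-arity projection relations. The paper treats the corollary as immediate from Theorem~\ref{theorem:NU}; your additional bookkeeping (the $n\le k$ case and the substitution step) just makes explicit what the paper leaves implicit.
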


\begin{example} \label{example:2SAT}
    Corollary \ref{corollary:NU} can be exemplified by the the well-known 2-SAT problem. The standard way to encode 2-SAT as a CSP is by using the constraint language 
    $$
    \Gamma_{\text{2-SAT}}=\{R_{00},R_{01},R_{10},R_{11}\}
    $$ 
    where $R_{ij}=\{0,1\}^2\setminus\{(i,j)\}$ represent each clause type (see \cite[Example 2.2]{barto_polymorphisms_2017}). 
    
    It is well known that the relations pp-definable from $\Gamma_{\text{2-SAT}}$ are exactly those invariant under the majority operation, i.e., the unique 3-ary near-unanimity operation on $\{0,1\}$. By Corollary \ref{corollary:NU}, $\Gamma_{\text{2-SAT}}$ has pp-definitions of length $O(n^2)$.
\end{example}

Based on the above, it is easy to see that Conjecture~\ref{conjecture:short-definitions-iff-few-subpowers} holds for constraint languages $\Gamma$ over Boolean domains:

\begin{theorem}[see {\cite[Corollary~1]{lagerkvist_polynomially_2014}}] \label{theorem:Boolean}
    Let $\Gamma$ be a Boolean constraint language with few subpowers. Then $\Gamma$ has pp-definitions of length $O(n^2)$. 
\end{theorem}

\begin{proof}
    By the classification of Post's lattice \cite{post_two_1941}, every Boolean constraint language with few subpowers has either a majority polymorphism or an affine polymorphism clone. In both cases, we obtain quadratic-length pp-definitions, in the same fashion as in Examples \ref{example:lin} and \ref{example:2SAT}.
\end{proof}

However, on bigger domains, the situation is much more complicated. Already on the 3-element domain there are constraint languages that have few subpowers but neither a Mal'tsev, nor a near-unanimity polymorphism. For example \cite[Examples 2.1.1 and 2.1.2]{brady_notes_2025} have both a 3-edge polymorphism (and, in fact, even so-called \emph{generalized majority-minority polymorphism} \cite{dalmau_generalized_2006}), without being invariant under any Mal'tsev or near-unanimity operation.

\section{Universal-algebraic tools and constructions} \label{section:algebra}

In the following, we are going to briefly introduce some basic notions from universal algebra that will allow us to state our main result, Theorem \ref{theorem:main-result}, in its full generality. In Sections \ref{section:algshort} and \ref{section:multi-sorted}, we furthermore discuss how short pp-definitions behave with respect to basic algebraic constructions. For more background in universal algebra, we refer the reader to the standard textbooks \cite{bergman_universal_2011,burris_course_1981}.

An \emph{algebra} $\algA = (A;(f_i)^{\algA}_{i \in I})$ is a first-order structure in a purely functional language $(f_i)_{i \in I}$ (where each symbol $f_i$ has an associated \emph{arity}). We say $\algA$ is finite if its domain $A$ is finite.  A \emph{subalgebra} $\algB = (B;(f_i)^{\algB}_{i \in I})$ of an algebra $\algA = (A;(f_i)^{\algA}_{i \in I})$ (denoted $\algB \leq \algA$) is an algebra obtained by restricting all \emph{basic operations} $f_i^{\algA}$ to an invariant subset $B \subseteq A$. The \emph{product} $\prod_{i \in I} \algA_i$ of a family of algebras $(\algA_i)_{i \in I}$ in the same language is defined as the algebra with domain $\prod_{i\in I} A_i$, whose basic operations are defined coordinate-wise. A \emph{homomorphism} $h\colon \algA \to \algB$ between algebras is defined as a map that preserves all basic operations, i.e., $h(f_i^{\algA}(a_1,\ldots,a_n)) = f_i^{\algB}(h(a_1),\ldots,h(a_n))$ for all $i\in I$. 

The \emph{kernel} of any homomorphism (i.e., the relation defined by $(x,y)\in\theta \leftrightarrow h(x)=h(y)$) is a \emph{congruence}, that is, an equivalence relation invariant under $\algA$. Conversely, for every congruence $\alpha$ of $\algA$, it is easy to see that one can construct a \emph{quotient algebra} $\algA/\alpha$, as the homomorphic image of the quotient mapping $x \mapsto x/\alpha$. Under the inclusion order, the set of all congruences of an algebra $\algA$ forms the \emph{congruence lattice} $\Con(\algA)$. The minimal and maximal element of this lattice are always the trivial congruences $0_\algA = \{(x,x) \mid x \in A\}$ and $1_\algA = \{(x,y) \mid x,y \in A\}$. An algebra $\algA$ is called \emph{subdirectly irreducible} if $0_\algA$ has a unique cover in $\Con(\algA)$, i.e., there is a unique minimal non-trivial congruence (called the \emph{monolith}).

By $\HHH$, $\SSS$, and $\PPP$ we denote the closure of a set of algebras under homomorphic images, subalgebras, and products respectively. It is well-known that the closure of any set of algebras under $\HSP$ is a \emph{variety}, i.e., a class of algebras defined by a set of identities (by Birkhoff's theorem, see, e.g., \cite{bergman_universal_2011}). A variety is called \emph{residually finite} if (up to isomorphism) it only contains finitely many subdirectly irreducible algebras, all of which are finite.

By a \emph{factor} of an algebra $\algA$ we mean any quotient algebra of a subalgebra of~$\algA$. Let us denote the set of all factors of an algebra $\alg A$ by $\HSfactors(\algA)$. Note that $\HSfactors(\algA)$ is finite for finite algebras $\algA$, and every member of $\HS(\algA)$ (homomorphic images of subalgebras of $\algA$) is isomorphic to a factor $\algB\in\HSfactors(\alg A)$.

\subsection{Algebras with short pp-definitions} \label{section:algshort}

If we assign a function symbol to every element of $\Pol(\Gamma)$, for a constraint language $\Gamma$, then we can also regard $\algA = (A;\Pol(\Gamma))$ as an algebra (the \emph{polymorphism algebra} of $\Gamma$). On the other hand, for every algebra $\algA$, its invariant relations $\Inv(\algA)$ form a relational clone. Thus, it makes sense to say that a finite algebra $\algA$ has \emph{few subpowers} if $\Inv(\algA)$ has few subpowers.

Note that a relation $R$ is invariant under $\algA$ if and only if $R \leq \algA^n$ for some $n$, i.e., $R$ is a subalgebra of some finite power of $\algA$. (Such $R$ is also called a \emph{subpower of $\algA$}, from which the notion of having ``few subpowers'' derives its name.)

As mentioned earlier, by a (non-constructive) proof of Aichinger, Mayr and McKenzie \cite{aichinger_number_2014}, for every algebra $\algA$ with few subpowers there exists a finite relational basis of $\Inv(\algA)$, i.e., a constraint language $\Gamma$ such that $\Inv(\algA) = \langle \Gamma \rangle$ (for general algebras $\algA$ this is not always the case).

\begin{definition}
    An algebra $\algA$ has \emph{pp-definitions of length $O(f(n))$}, if there exists a constraint language $\Gamma$ such that $\Inv(\algA) = \langle \Gamma \rangle$, and $\Gamma$ has pp-definitions of length $O(f(n))$. An algebra $\algA$ has \emph{short pp-definitions}, if $\Inv(\algA) = \langle \Gamma \rangle$ for some $\Gamma$ with short pp-definitions.
\end{definition}

By Lemma \ref{lemma:ppinterdef}, having short pp-definitions (or definitions of length $O(f(n))$) is independent of the choice of the relational basis $\Gamma$. By \cite{geiger_closed_1968} we know that for a finite algebra $\algA$, the clone $\Clo(\algA) = \Pol(\Inv(\algA))$ consists exactly of all term operations of $\algA$, and $\Inv(\algA) = \Inv(\algB)$ if and only if $\Clo(\algA) = \Clo(\algB)$. Thus having short pp-definitions (or pp-definition of length $O(f(n))$) is, in fact, a property that only depends on the term clone $\Clo(\algA)$ of an algebra.

Moreover, we know that two clones on a finite set are isomorphic (i.e. there is a bijection between them preserving composition and every projection), if and only if their invariant relations are pp-bi-interpretable (see e.g. \cite{bodirsky_topological_2015}). Thus Lemma \ref{lemma:ppinterdef} implies that having short pp-definitions is even invariant under clone isomorphisms. Thus, one can argue that having short pp-definitions is a property of clones.

Nevertheless, we are going to present our proof in the language of algebras, not clones, as this will simplify the presentation significantly, especially given the use of multi-sorted approach that we will introduce in Section \ref{section:multi-sorted}.

Let us first observe the following fact about powers of algebras:

\begin{lemma} \label{lemma:poweralg}
    Let $\algA$ be an algebra and $\algB = \algA^k$ for some $k > 1$. Then $\algB$ has pp-definitions of length $O(f(n))$ if and only if $\algA$ has pp-definitions of length $O(f(\lceil \frac{n}{k} \rceil ))$.
\end{lemma}

We remark that any relational bases of $\algA$ and $\algA^k$ are pp-bi-interpretable. Thus, their equivalence regarding short pp-definitions already follows from Lemma \ref{lemma:ppbiinterpret}. In order to obtain the more precise statement of Lemma \ref{lemma:poweralg}, we however need to actually study these interpretations in more detail.

\begin{proof}[Proof of Lemma \ref{lemma:poweralg}]
    It is straightforward to see that any relation $R \subseteq B^n$ is invariant under $\algB$ if and only if it is invariant under $\algA$, when interpreted as an $kn$-ary relation on $A$.
    
    We are first going to prove the ``only if'' direction. Let $\Delta$ be a relational basis of $\Inv(\algB)$. By interpreting every $m$-ary relation $Q \in \Delta$ as a $km$-ary relation $Q' \leq \algA^{km}$, we obtain a relational basis $\Delta' = \{ Q' \mid Q \in \Delta \}$ of $\Inv(\algA)$. 
    
    Let $R\leq\algA^n$. By adding dummy variables, we can assume without loss of generality that $n=k\ell$ for $\ell = \lceil \frac{n}{k} \rceil$. Then by assumption, $R$ considered as an $\ell$-ary relation over $B$ has a pp-definition $\phi(x_1,\dots,x_\ell)$ from $\Delta$ of length in $O(f(\ell))$. If we substitute each ($B$-valued) variable in $\phi$ by a $k$-tuple of ($A$-valued) variables, and each  $\Delta$-predicate $Q$ in $\phi$ by the corresponding $\Delta'$-predicate $Q'$, then we obtain a pp-definition $\phi'$ of $R$ of length at most $k \cdot f(l) = k \cdot f(\lceil \frac{n}{k} \rceil)$. Note further that existentially quantifying all the additionally added dummy variables adds only constantly many symbols. Thus $\algA$ has pp-definitions of length $O(k \cdot f(\lceil \frac{n}{k} \rceil)) = O(f(\lceil \frac{n}{k} \rceil))$.
    
    Now let us prove the ``if'' direction. Let $\Gamma$ be a relational basis of $\Inv(\algA)$. Let $e \colon A \to A^k$ be the map $x \mapsto (x,\ldots,x)$. For every $Q \in \Gamma$, we define 
    $$
    Q' = \{(e(x_1),\ldots,e(x_m)) \mid (x_1,\ldots,x_m) \in Q\} \leq \algB^m$$ and also the binary relations $P_i = \{((x_1,\ldots,x_k), e(x_i)) \mid (x_1,\ldots,x_k) \in B \} \leq \algB^2$, for $i = 1,\ldots,k$. We construct the relational basis of $\Inv(\algB)$ as
    $$
    \Gamma'=\{Q'\mid Q\in\Gamma\}\cup\{P_1,\dots,P_k\}.
    $$

    We need to show that $\Gamma'$ and thus also $\alg B$ has pp-definitions of length $O(f(n))$.
    Let $R'\leq \algB^{n}$ and let $R\leq\alg A^{kn}$ be the relation $R'$ considered as a $kn$-ary relation over~$A$. By assumption, there exists a pp-definition $\phi(x_1,\ldots,x_{nk})$ of $R \leq \algA^{kn}$ over $\Gamma$  of length in $O(f(\lceil \frac{kn}{k}\rceil))=O(f(n))$. Let $z_1,\ldots,z_\ell$ be its existentially quantified variables. Let us then define $\phi'(y_1,y_2,\ldots,y_n)$ as a pp-formula over $\Gamma'$ with existentially quantified variables $x_1',\ldots,x_{nk}', z_1',\ldots,z_{\ell}'$ and predicates $P_i(y_j,x_{(j-1)k+i}')$ for all $i \in [k], j \in [n]$, as well as $Q'(u_1',\ldots,u_m')$ for every predicate $Q(u_1,\ldots,u_m)$ in $\phi$ with $u_i \in \{x_1,\ldots,x_{nk}, z_1,\ldots,z_\ell\}$. It is easy to check that $\phi'$ defines $R' \leq \algB^{n}$ over $\Gamma'$. Clearly, the length of $\phi'$ is in $O(f(n))$.
\end{proof}

\subsection{Multi-sorted view}\label{section:multi-sorted}

In the following, we will also make use of multi-sorted relations, as they provide a more natural framework for our proof in Section \ref{section:main-result}. To be more specific, if $\mathcal F$ is a finite family of finite algebras in the same language, then it still makes sense to define the set of all invariant relations $\Inv(\mathcal F)$ to consist of all relations $R \leq \algA_1 \times \ldots \times \algA_n$ for $\algA_1,\ldots,\algA_n \in \mathcal F$. In particular, the set of all such relations is still closed under intersections, direct products, projections and permutations of coordinates, and contains the equality relations on all possible domains. It is therefore a (multi-sorted) relational clone. Such relational clones can again be characterized as sets of relations that are closed under pp-definitions; the only difference to the single sorted case is that for each variable we also need to specify its sort/domain.

If furthermore all elements of $\mathcal F$ have a common $k$-edge term, the finite relational basis result of Aichinger, Mayr and McKenzie \cite{aichinger_number_2014} still applies, i.e. there is a finite set of (multi-sorted) relations $\Gamma$ such the relations invariant under $\mathcal F$ are exactly the relations pp-definable from $\Gamma$. The translation of the proof to the multi-sorted setting is straightforward; we refrain from giving the technical details here. Therefore it makes also sense to define the property of having short pp-definitions for such families of algebras $\mathcal F$.

We note that studying constraint languages in which the variables can range over different domains is a fairly standard viewpoint in CSP (as well as in clone theory); for instance, it was used in Zhuk's proof of the CSP dichotomy theorem \cite{zhuk_proof_2020}. 

This multi-sorted approach allows us to consider (multi-sorted) relations over the finite family $\HSfactors(\algA)$ of factors of $\algA$ rather than (single-sorted) relations over $\algA$ itself. This is justified by the following lemma.

\begin{lemma} \label{lemma:HS}
    Let $\alg A$ be a finite algebra. Then the algebra $\algA$ has pp-definitions of length $O(f(n))$, if and only if the family of algebras $\HSfactors(\algA)$ has (multi-sorted) pp-definitions of length $O(f(n))$.    
\end{lemma}

\begin{proof}
    Let us assume without loss of generality that $|A|>1$ (otherwise the statement trivially holds). For every factor $\algB_i\in\HSfactors(\algA)$ we have $\algB_i = \algS_i/\alpha_i$ for some subalgebra $\algS_i \leq \algA$ and $\alpha_i\in\Con(\algS_i)$. Note that the binary (two-sorted) relation 
    $$
    P_{\algB_i}=\{(a,a/\alpha_i) \colon a \in \algS_i\} \leq \algS_i \times \algB_i
    $$
    is invariant under $\HSfactors(\algA)$; it describes the graph of the homomorphism given by the canonical projection $\pi_{\alpha_i}\colon\algS_i \to \algB_i$ defined by $\pi_{\alpha_i}(a)=a/\alpha_i$.

    First, we will prove the direct implication. Let $\Gamma$ be a relational basis of $\Inv(\algA)$. We construct the multi-sorted relational basis $\Gamma'$ as:
    $$
    \Gamma'=\Gamma\cup\{P_{\algB_i}\mid \algB_i\in\HSfactors(\algA)\}
    $$
    
    Now, let $R' \leq \algB_1 \times \ldots \times \algB_n$ with $\algB_i \in \HSfactors(\algA)$. It is not hard to see that the (single-sorted) relation
    $$
    R = \{ (a_1,\ldots,a_n) \in S_1 \times \cdots \times S_n \mid (a_1/\alpha_1,\ldots,a_n/\alpha_n) \in R' \}
    $$
    is invariant under $\algA$. By assumption, $R$ has a pp-definition $\phi(x_1,\ldots,x_n)$ of length $O(f(n))$. The relation $R'$ can then be defined from $\Gamma'$ by the pp-formula 
    $$
    \phi'(x_1,\ldots,x_n)\ \equiv\ \exists y_1,\ldots,y_n \left(\bigwedge_{i = 1}^n P_{\algB_i}(y_i,x_i) \land \phi(y_1,\ldots,y_n)\right)
    $$
    of length $O(f(n)+2n) = O(f(n))$.
    
    For the converse, take any (multi-sorted) relational basis $\Gamma'$ of $\Inv(\HSfactors(\algA))$. Consider any relation $R'\in\Gamma'$. As above, $R' \leq \algB_1 \times \ldots \times \algB_n$ with $\algB_i \in \HSfactors(\algA)$ and the (single-sorted) relation $R = \{ (a_1,\ldots,a_n) \in S_1 \times \cdots \times S_n \mid (a_1/\alpha_1,\ldots,a_n/\alpha_n) \in R' \}$ is invariant under $\algA$.
        
    It is not hard to see that $\Gamma = \{R \mid R'\in \Gamma'\}$ is a relational basis of $\Inv(\algA)$, and for any (multi-sorted) pp-definition $\phi'$ of a relation $Q \leq \algA^n$ over $\Gamma'$, the formula $\phi$ obtained by replacing every occurrence of the symbol $R'$ by $R$ defines $Q$ over $\Gamma$.
\end{proof}

In particular, Lemma \ref{lemma:HS} implies that $\algA$ has short pp-definitions, if and only if $\HSfactors(\algA)$ has (multi-sorted) short pp-definitions. We remark that Lemma \ref{lemma:HS} does \emph{not} imply that any \emph{single} factor $\algB \in \HSfactors(\algA)$ (apart from $\algA$ itself) has short pp-definitions if $\algA$ does. In fact, we do not know if this is true (see Question \ref{question:interpretation} in the Discussion section).

\subsection{Compact representations}

For an algebra (or family of algebras) with a $k$-edge term, we know that all $n$-ary invariant relations have generating sets whose size is polynomial in $n$. In fact, we know that there are always canonical such generating sets that are called \emph{compact representations}. Compact representations were first introduced for Mal'tsev algebras in \cite{bulatov_simple_2006}, and then generalized to algebras with $k$-edge term in \cite[Definition 3.2]{berman_varieties_2010}. We are going to define a multi-sorted variant of compact representation similar to \cite[Definition 2.2.10]{brady_notes_2025}.

A \emph{fork} in a relation $R\subseteq A_1\times A_2 \times \ldots \times A_n$ is a triple $(i,a,b)$ where $i\in[n]$, $a,b\in A_i$, and there exists a pair of tuples $(\bar x,\bar y) \in R^2$ (called \emph{witnesses} of the fork) with $\proj_{[i-1]}\bar x=\proj_{[i-1]}\bar y$ and $x_i = a, y_i = b$. We define the \emph{signature} of $R$, denoted by $\Sig(R)$, as set of all of the forks $(i,a,b)$ of $R$.

Next, let $\mathcal F$ be a finite set of finite algebras that has a common $k$-edge term. Let $R$ be an $n$-ary relation that is invariant under $\mathcal F$, so $R \leq \algA_1 \times \algA_2 \times \ldots \times \algA_n$ with $\algA_i \in \mathcal F$ for all $i$. A \emph{compact representation} of $R$ is then any set of tuples $G\subseteq R$ such that:

\begin{itemize}
    \item $\Sig(G)=\Sig(R)$,
    \item $\proj_I G=\proj_I R$ for all $I\subseteq[n],|I|<k$, and
    \item $|G|\leq 2|\Sig R|+\sum_{I\subseteq[n],|I|<k}|\proj_I R|$.
\end{itemize}

Note that $|\Sig(R)|\leq c^2 n \in O(n)$ and $|G|\leq 2c^2n+ c^{k-1} \binom{n}{k-1} \in O(n^{\max(2,k-1)})$, where $c$ is an upper bound on the domain sizes in $\mathcal F$. Thus compact representations are indeed of polynomial size in $n$. It is also not hard to see that every relation $R$ has a compact representation.

We remark that our definition of compact representation slightly differs from \cite[Definition 2.2.10]{brady_notes_2025} (or \cite[Definition 3.2]{berman_varieties_2010}), as in the definition of $\Sig(R)$ we also include forks that do not come from so called \emph{minority pairs}. This is however only a minor technical detail that is not important for our analysis. In particular, in the very same way as in \cite[Theorem 2.2.11]{brady_notes_2025} (or \cite[Corollary 3.9.]{berman_varieties_2010}), one can prove that compact representations are generating sets of invariant relations:

\begin{theorem}\label{theorem:compact-representations-generate}
    Let $\mathcal F$ be a finite set of finite algebras with a common $k$-edge term. If $R\leq \algA_1 \times \ldots \times \algA_n$, where $\algA_i \in \mathcal F$, and $G$ is a compact representation of $R$, then $R=\Sg_{\algA_1 \times \ldots \times \algA_n}G$. 
\end{theorem}

We are only going to need the following corollary of Theorem \ref{theorem:compact-representations-generate}, in order to prove Lemma~\ref{lemma:critical}.

\begin{corollary}\label{corollary:subpower-with-same-forks}
    Let $\mathcal F$ be a finite set of finite algebras with a common $k$-edge term. If $R \leq R' \leq \algA_1 \times \ldots \times \algA_n$ are such that $\Sig(R)=\Sig(R')$ and $\proj_I R=\proj_I R'$ for all $I\subseteq[n]$ with $|I|<k$, then $R=R'$. 
\end{corollary}
\begin{proof}
Any compact representation $G$ of $R$ is already a compact representation of $R'$ so by Theorem~\ref{theorem:compact-representations-generate}, $R=\Sg_{\alg A^n}G=R'$.
\end{proof}

 We remark that an analogous statement does not hold in the absence of a $k$-edge polymorphism, unlike we falsely claimed for relations $R$ with the parallelogram property in \cite[Observation 16]{bulin_short_2023} of the conference version of this paper; as a counterexample take the ternary relations $R=\{(0,0,0),(1,1,1),(2,2,2),(0,1,2)\}$ and $R'=R\cup \{(0,1,1)\}$. It is easy to see that $R$ has the parallelogram property, and $\Sig(R) = \Sig(R')$ - nevertheless $R \neq R'$.

Let us mention that compact representations generalize the echelon form of a matrix, strong generating sets in permutation groups \cite{sims_computational_1970}, and that similar notions were first developed for Mal'tsev algebras~\cite{bulatov_simple_2006} (where the projections are not needed) and generalized majority-minority~\cite{dalmau_generalized_2006}. For more details, we refer the reader to \cite[Section 3]{berman_varieties_2010} and \cite[Section 2.2]{brady_notes_2025}.

\section{Main result} \label{section:main-result}

In this section, we prove the main result of our paper. We first need to introduce some standard definitions that found prominent use in the universal algebraic approach to constraint satisfaction before (see, e.g.,~\cite[Chapter 2]{brady_notes_2025}).

A binary relation $R \subseteq A \times B$ has the \emph{parallelogram property} if $(a,c),(a,d),(b,c) \in R$ implies $(b,d) \in R$. Following \cite{kearnes_clones_2012} we further say that an $n$-ary relation $R \subseteq A_1 \times A_2 \times \ldots \times A_n$ has the \emph{parallelogram property}, if for all proper subsets $I \subset [n]$ it has the binary parallelogram property when considered as a binary relation $R \subseteq (\prod_{i \in I}A_i) \times (\prod_{j \notin I}A_j)$. Note that the parallelogram property is preserved under adding dummy variables and intersection (thus, under conjunction).

An invariant relation $R \leq \algA_1 \times \ldots \times \algA_n$ is called \emph{critical} if it is \emph{$\land$-irreducible}, i.e., it cannot be written as the intersection of strictly bigger relations $Q \leq \algA_1 \times \ldots \times \algA_n$, and it has \emph{no dummy variables}.

We will make extensive use of the following property of critical relations, which follows from~\cite[Theorem 3.6]{kearnes_clones_2012} (together with \cite[Theorem 3.3]{kearnes_clones_2012}): 

\begin{theorem} \label{theorem:decompose-to-parallelogram}
Let $\algA$ be a finite algebra with $k$-edge term, and let $R \leq \algA_1 \times \ldots \times \algA_n$ be a critical relation for $\algA_1,\ldots,\algA_n \in \HSfactors(\algA)$. If $n\geq k$, then $R$ has the parallelogram property.
\end{theorem}

In particular, this implies that every invariant relation $R \leq \algA_1 \times \ldots \times \algA_n$ can be written as the conjunction $R = R' \land \bigwedge_{I \subseteq [n], |I| < k} \proj_I(R)$, where $R'$ has the parallelogram property (as also observed in~\cite[Corollary 2.3.5.]{brady_notes_2025}).

Using these notions, we can reduce the problem of finding short pp-definitions to critical relations with the parallelogram property:
                                    
\begin{lemma} \label{lemma:critical}
    Let $\algA$ be an algebra with a $k$-edge term. If for all $\algA_1,\ldots,\algA_n\in \HSfactors(\algA)$, all the critical relations $R \leq \algA_1 \times \ldots \times \algA_n$ with the parallelogram property have pp-definitions of length $O(f(n))$, then $\algA$ has pp-definitions of length $O(n^{k-1} + n\cdot f(n))$.
\end{lemma}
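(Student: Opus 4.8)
The plan is to write an arbitrary relation $R\le\algA^n$ as the intersection $S\cap T$ of a relation $S$ cut out by low-arity constraints (this produces the $n^k$ term) and a relation $T$ that is an intersection of critical relations with the parallelogram property (this produces the $n\cdot f(n)$ term). Since $\algA^n$ is finite, its subpowers form a finite lattice, so $R$ is the intersection of the critical relations lying above it, each applied to the appropriate subset of coordinates (after projecting away dummy coordinates). I would then group these critical relations according to whether or not they have the parallelogram property, and treat the two groups separately.

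For the group \emph{without} the parallelogram property I would invoke the structure theory of critical (key) relations under edge/cube terms: for an algebra with a $k$-edge term, a critical relation that does not have the parallelogram property has arity at most $k$ (see \cite{zhuk_key_2017}, cf.\ also \cite[Chapter~2]{brady_notes_2022}; for near-unanimity this is exactly Theorem~\ref{theorem:NU}, and for affine algebras all critical relations have the parallelogram property). There are only finitely many (invariant) relations of arity $\le k$ on products of $\le k$ subalgebras of $\algA$, each lies in $\Inv(\algA)$ and hence has a pp-definition of bounded length over a fixed finite basis $\Gamma$, and there are $O(n^k)$ ways to place such a relation on the coordinates $x_1,\dots,x_n$. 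Conjoining all of them defines a relation $S\supseteq R$ by a pp-formula $\phi_S$ of length $O(n^k)$, exactly as in Corollary~\ref{corollary:NU}.

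Let $T$ be the intersection of all critical relations with the parallelogram property that lie above $R$; then $T$ has the parallelogram property (intersection, and dummy-extension, preserve it), $R=S\cap T$, and it remains to give $T$ a pp-definition of length $O(n\cdot f(n))$. Here I would use Observation~\ref{observation:about-signatures} together with the fact that, $|A|$ being fixed, $|\Sig(T)|\le|A|^2 n$. For each triple $\tau\in([n]\times A^2)\setminus\Sig(T)$ I would produce a critical relation $C_\tau$ with the parallelogram property such that $R\subseteq C_\tau$ and $\tau\notin\Sig(C_\tau)$; then $T\subseteq C_\tau$ (as $T$ is the intersection of \emph{all} such relations above $R$), the relation $\bigcap_\tau C_\tau$ has the parallelogram property and contains $T$, and, using monotonicity of $\Sig$ and $\Sig(\bigcap_j S_j)\subseteq\bigcap_j\Sig(S_j)$, one gets $\Sig(\bigcap_\tau C_\tau)=\Sig(T)$, whence $\bigcap_\tau C_\tau=T$ by the Observation. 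Each $C_\tau$ has arity $\le n$ and, by hypothesis (assuming, as we may, that $f$ is non-decreasing), a pp-definition of length $\le f(n)$; conjoining the at most $|A|^2 n$ of them gives $\phi_T$ of length $O(n\cdot f(n))$. Finally $R(x_1,\dots,x_n)\leftrightarrow\phi_S(\bar x)\wedge\phi_T(\bar x)$ has length $O(n^k+n\cdot f(n))$, and by Lemma~\ref{lemma:ppinterdef} the hidden constant can be absorbed into $c$.

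I expect the main obstacle to be the production of the separating relations $C_\tau$: a maximal pp-relation above $R$ that still avoids the triple $\tau$ need not be $\land$-irreducible, so one has to argue --- using the parallelogram property and the $k$-edge term --- that it can be refined to a genuinely critical relation without reintroducing $\tau$ into the signature. The structural dichotomy used for the first group (critical relations are of low arity or have the parallelogram property) is the other nontrivial input, but it is available in the literature.
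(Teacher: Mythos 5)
Your decomposition is the paper's: the $n^k$ term comes from the Kearnes--Szendrei dichotomy (a critical relation of an algebra with a $k$-edge term either has essential arity at most $k$ or has the parallelogram property), and your $T$ plays the role of the paper's $R'$, the minimal invariant relation with the parallelogram property containing $R$ (the paper invokes Brady's Corollary~2.3.5 for the identity $R=R'\land\bigwedge_{|I|\leq k}\proj_I(R)$ rather than regrouping the $\land$-irreducibles above $R$ by hand, but the content is the same). The bound of at most $|A|^2 n$ parallelogram conjuncts via the signature, and the use of Observation~\ref{observation:about-signatures} together with $\Sig(\bigcap_j S_j)\subseteq\bigcap_j\Sig(S_j)$, are also exactly as in the paper.

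The step you flag as the main obstacle is the one genuine gap, and your proposed repair points in the wrong direction. The paper does not take a relation maximal among those whose signature avoids the triple $\tau=(i,a,b)$ --- which, as you correctly observe, need not be $\land$-irreducible --- it avoids a \emph{tuple}. Since $(i,a,b)\notin\Sig(T)$ and $a\in\proj_i(T)$, pick $\bar x\in T$ with $x_i=a$ and let $\bar y$ be $\bar x$ with its $i$-th entry replaced by $b$; then $\bar y\notin T$, and one takes $C_\tau$ maximal among invariant relations containing $T$ and omitting $\bar y$. Maximality with respect to omitting a single tuple forces $\land$-irreducibility for free, since any intersection of strictly larger invariant relations must contain $\bar y$. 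What remains is to verify that this $C_\tau$ has the parallelogram property and that $(i,a,b)\notin\Sig(C_\tau)$; this is where the parallelogram property and the dichotomy actually enter (the needed facts are in Kearnes--Szendrei and Brady's notes), and it is the part of the argument that cannot be skipped. Triples $(i,a,b)$ with $a\notin\proj_i(T)$ should be excluded from the index set, as the paper does, or absorbed into the unary projections already present in your $\phi_S$. With the construction of the $C_\tau$ supplied along these lines, your argument coincides with the paper's proof.
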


\begin{proof}
    Clearly, every relation $R \leq \algA_1 \times \ldots \times \algA_n$ is the intersection of the $\land$-irreducible invariant relations containing it, thus it can be written as a conjunction of critical relations. Hence, we only need to give a linear upper bound on the number of such critical relations.

    By Theorem~\ref{theorem:decompose-to-parallelogram}, we can write $R = R' \land P$, where $P =\bigwedge_{I \subseteq [n], |I| < k} \proj_I(R)$, and $R' = \bigwedge_{i=1}^K S_i$ for a set $\{S_1,S_2,\dots,S_K\}$ of $\land$-irreducible relations with the parallelogram property (such that, modulo its dummy variables, every relation $S_i$ is a critical relation of arity $\geq k$).

Then, let us consider the following chain of relations:
    $$
    P\cap S_1\geq P\cap S_1\cap S_2\geq\dots\geq P\cap\bigcap_{i=1}^j S_i\geq\dots\geq P\cap\bigcap_{i=1}^K S =P\cap R'=R
    $$
  By definition of $P$, all these relations have the same projections to $<k$-element subsets of coordinates. Note that for consecutive elements in the chain $C_j=P\cap\bigcap_{i=1}^jS_i$ and $C_{j+1}=P\cap\bigcap_{i=1}^{j+1}S_i$, we have $\Sig(C_j)\supseteq\Sig(C_{j+1})$. But this inclusion can only be strict for at most $|\Sig (P\cap S_1)|-|\Sig(R)|\leq n \cdot |A|^2$ elements. If $\Sig(C_j)=\Sig(C_{j+1})$, then by Corollary~\ref{corollary:subpower-with-same-forks} we have that $C_j=C_{j+1}$ and the critical relation corresponding to $S_{j+1}$ can be omitted from the conjunction defining $R$.
    
    Consequently, $R'$ can be defined as a conjunction of at most $n \cdot |A|^2$ many critical relations with the parallelogram property, which concludes the proof.
\end{proof}

When dealing with multi-sorted relations $R$ over $\HSfactors(\algA)$, we can furthermore always restrict the domain $\algA_i$ of the $i$-th variable of a relation $R$ to its projection $\proj_i(R) \leq \algA_i$. So, without loss of generality, we can assume that $R \leq_{sd} \algA_1 \times \ldots \times \algA_n$ is \emph{subdirect}, i.e., its projection to every coordinate $i$ is the full domain $\algA_i$.

For a subdirect relation $R \leq_{sd} \algB \times \algC$ with the parallelogram property, let us define the \emph{linkedness congruence $\theta_B$} on $\algB$ as follows:
$$
(x,y) \in \theta_B\ \leftrightarrow\ (\exists c \in C)(R(x,c) \land R(y,c))
$$

For a general relation $R \leq_{sd} \algA_1 \times \ldots \times \algA_n$ with the parallelogram property, and any proper subset $I \subset [n]$ of coordinates, we define the \emph{linkedness congruence $\theta_I$} on $\proj_I(R)$ analogously, where we consider $R$ as a binary relation between $\proj_I(R)$ and $\proj_{[n]\setminus I}(R)$. We will also write $\theta_i$ instead of $\theta_{\{i\}}$ (in \cite{kearnes_clones_2012}, this relation was referred to as $i$-th coordinate kernel). It follows from the parallelogram property that $\theta_I$ is indeed a congruence of $\proj_I(R)$. 

A subdirect relation $R \leq_{sd} \algA_1 \times \ldots \times \algA_n$ is \emph{reduced} if 
every tuple $(a_1,\ldots,a_n) \in R$ is already uniquely determined by $(a_1,\ldots,a_{i-1},a_{i+1},\ldots,a_n)$, for any coordinate $i$; in other words, $\theta_i$ is trivial, for every $i = 1,\ldots,n$. The following lemma narrows down the quest for short pp-definitions to \emph{reduced}, subdirect, critical relations with the parallelogram property:

\begin{lemma} \label{lemma:reduced}
    Let $\algA$ be an algebra with a $k$-edge term.  Assume that all relations $R \leq_{sd} \algA_1 \times \cdots \times \algA_n$ with $\algA_i \in \HSfactors(\algA)$ that are reduced, critical, and have the parallelogram property, have (multi-sorted) pp-definitions of length $O(f(n))$. Then $\algA$ has pp-definitions of length $O(n^{k-1} + n \cdot f(n))$.
\end{lemma}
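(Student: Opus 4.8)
The plan is to bootstrap from Lemma~\ref{lemma:critical}: by the remark following it, the statement ``all critical relations with parallelogram property have pp\nobreakdash-definitions of length $\leq g(n)$ implies pp\nobreakdash-definitions of length $c(n^k+n\cdot g(n))$'' holds in the multisorted setting over $\HS(\algA)$, and combined with Lemma~\ref{lemma:HS} it transfers to $\algA$. So it suffices to show that, under the hypothesis of the present lemma, every critical relation $R$ with the parallelogram property over sorts in $\HS(\algA)$ has a pp\nobreakdash-definition of length $f(n)+O(n)$; feeding $g(n)=f(n)+O(n)$ into the above yields length $c'(n^k+n f(n)+O(n^2))=O(n^k+n f(n))$ for all relations over $\HS(\algA)$ (using $k\geq 2$), and Lemma~\ref{lemma:HS} then gives the claimed bound for $\algA$. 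As already observed in the text, passing from an arbitrary critical parallelogram relation to a \emph{subdirect} one over $\HS(\algA)$ costs only $O(n)$ extra symbols (restrict each domain to $\proj_i(R)$ and reintroduce it via the graph of the inclusion $\proj_i(R)\hookrightarrow\algA_i$, a fixed binary invariant relation), so I assume $R\leq_{sd}\algA_1\times\cdots\times\algA_n$ subdirect, with $\algA_i\in\HS(\algA)$, critical, with the parallelogram property.

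Given such $R$, let $\theta_i$ be the linkedness congruence on $\algA_i=\proj_i(R)$, let $q_i\colon\algA_i\to\algA_i/\theta_i$ be the quotient maps, put $\rho=q_1\times\cdots\times q_n$ and $\bar R=\rho(R)$. The two structural claims, both proved by a short diagram chase using only the parallelogram property, are: (i) $R$ is a union of $\ker(\rho)$\nobreakdash-classes, i.e.\ $R=\rho^{-1}(\bar R)$; and (ii) $\bar R$ is \emph{reduced}. For~(i) one shows, changing one coordinate at a time, that $\bar a\in R$ together with $a_i\mathrel{\theta_i}b_i$ for all $i$ forces $\bar b\in R$: the definition of $\theta_i$ provides two tuples of $R$ differing only in coordinate $i$, and one application of the parallelogram property (splitting off coordinate $i$) replaces $a_i$ by $b_i$ in the current tuple. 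For~(ii), fix $i$ and read $R$ as a binary relation between $\algA_i$ and $\prod_{j\neq i}\algA_j$, with linkedness congruences $\theta_i$ on the first side and, say, $\psi_i$ on the second; the argument of~(i) restricted to the coordinates $j\neq i$ shows $\prod_{j\neq i}\theta_j\subseteq\psi_i$, and the standard computation (parallelogram again) shows that quotienting the first factor by $\theta_i$ and the second by \emph{any} congruence contained in $\psi_i$ trivializes the resulting linkedness; hence the linkedness congruence $\theta_i(\bar R)$ is trivial for every $i$.

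From~(i) and~(ii) the remaining properties of $\bar R$ follow by routine checking: $\bar R$ is subdirect since each $q_i$ is onto; it has the parallelogram property (lift a parallelogram configuration to $R$, correcting representatives within $\ker(\rho)$\nobreakdash-classes using~(i), apply the property in $R$, and project back); it is $\wedge$-irreducible, since a proper decomposition of $\bar R$ would pull back along the surjection $\rho$ to a proper decomposition of $R=\rho^{-1}(\bar R)$, contradicting criticality of $R$; and it has no dummy coordinates, since a dummy coordinate $i$ together with triviality of $\theta_i(\bar R)$ would force $\algA_i/\theta_i$ to be a one-element algebra, i.e.\ $\theta_i=\algA_i^2$, whence coordinate $i$ of $R$ would be dummy. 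Finally $\algA_i/\theta_i\in\HHH(\HS(\algA))=\HS(\algA)$, and the algebras $\algA_i/\theta_i$ inherit the $k$-edge term.

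Thus $\bar R$ is a reduced, subdirect, critical relation with the parallelogram property over $\HS(\algA)$, so by hypothesis it has a pp\nobreakdash-definition $\psi(y_1,\ldots,y_n)$ of length at most $f(n)$. Since $R=\rho^{-1}(\bar R)$, the formula
$$
R(x_1,\ldots,x_n)\;\leftrightarrow\;\exists y_1\cdots\exists y_n\Bigl(\psi(y_1,\ldots,y_n)\wedge\bigwedge_{i=1}^n G_{q_i}(x_i,y_i)\Bigr),
$$
with $G_{q_i}$ the (fixed-size, invariant) graph of $q_i$, defines $R$ and has length at most $f(n)+cn$. Adding the $O(n)$ overhead of the subdirectness reduction, every critical parallelogram relation over $\HS(\algA)$ has a pp\nobreakdash-definition of length $f(n)+O(n)$, and the reduction described in the first paragraph finishes the proof. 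The main obstacle is establishing claims~(i) and~(ii) --- that quotienting by the linkedness congruences turns a critical parallelogram relation into a reduced one from which the original is recovered as the full $\rho$-preimage; everything else is bookkeeping about formula lengths and about closure properties of $\HS(\algA)$ and of the $k$-edge term.
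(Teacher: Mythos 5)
Your proposal is correct and follows essentially the same route as the paper's proof: quotient a (subdirect) critical relation with the parallelogram property by its per-coordinate linkedness congruences to obtain a reduced one, recover the original as the full preimage via the graphs of the quotient maps, and then combine with the multisorted version of Lemma~\ref{lemma:critical} and with Lemma~\ref{lemma:HS}. The only difference is that you spell out the verifications (that the quotient is reduced, critical, subdirect, and retains the parallelogram property, and that $R$ equals the full preimage) which the paper compresses into ``it is not hard to see.''
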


\begin{proof}
    We first prove that we can pp-define all critical relations $R \leq_{sd} \algA_1 \times \ldots \times \algA_n$ with $\algA_i \in \HSfactors(\algA)$ that have the parallelogram property (but are not necessarily reduced), by pp-definitions of length at most $O(f(n))$. Given such a relation, let us consider the linkedness congruence $\theta_{i} \in \Con(\algA_i)$ for every coordinate $i$. Then let us define the quotient $R' = R/(\theta_1,\ldots,\theta_n)$. It is not hard to see that $R' \leq_{sd} \algA_1/\theta_1 \times \cdots \times \algA_n/\theta_n$ is also critical, and has the parallelogram property. Furthermore, by definition of the linkedness congruence $\theta_i$, $R'$ is reduced. 
    
    Since $R$ has the parallelogram property, it is equal to the full preimage of $R'$ under the quotient map $(x_1,\ldots,x_n) \mapsto (x_1/\theta_1,\ldots,x_n/\theta_n)$. Thus, every pp-definition $\phi'(x_1,\ldots,x_n)$ of $R'$ gives rise to the pp-definition 
    $$
    \phi(x_1,\ldots,x_n) \equiv \exists y_1,\ldots,y_n \left( \bigwedge_{i = 1}^n (x_i/\theta_i = y_i) \land \phi'(y_1,\ldots,y_n) \right)$$ 
    of length $O(f(n))$ which defines $R$; this proves our claim. The statement of the lemma then follows immediately from (the multi-sorted version of) Lemma \ref{lemma:critical} together with Lemma \ref{lemma:HS}.
\end{proof}

Any relation $R \leq_{sd} \algA_1 \times \ldots \times \algA_n$ as in Lemma \ref{lemma:reduced} comes with several nice properties. In algebraic terms, it is a \emph{graph of a joint similarity} between the algebras $\algA_i$, see \cite{kearnes_clones_2012} (or also \cite[Section 2.3.1]{brady_notes_2025}). For our proof, we are only going to need the result presented in Lemma \ref{lemma:SI}, or rather its generalization in Lemma \ref{lemma:linkedness}:

\begin{lemma}[{\cite[Lemma 2.4]{kearnes_clones_2012}}] \label{lemma:SI}
    Let $\algA_1,\ldots,\algA_n$ be algebras with few subpowers, and let $R \leq_{sd} \algA_1 \times \ldots \times \algA_n$ be a reduced, critical relation with the parallelogram property. Then every $\algA_i$ is subdirectly irreducible.
\end{lemma}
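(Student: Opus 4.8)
The plan is to prove the contrapositive: if some factor, say $\algA_1$ after permuting coordinates, fails to be subdirectly irreducible, then $R$ can be written as the intersection of two strictly larger subalgebras of $\algA_1\times\cdots\times\algA_n$, contradicting $\land$-irreducibility. First I would dispose of the degenerate cases: one may assume $n\geq 2$ and $|A_1|>1$, since otherwise the first coordinate of $R$ is a dummy variable, contradicting criticality. Then, as $\algA_1$ is finite, not subdirectly irreducible, and $|A_1|>1$, its congruence lattice has at least two distinct atoms $\mu\neq\nu$, and any two distinct atoms of a lattice with least element $0_{\algA_1}$ satisfy $\mu\wedge\nu=0_{\algA_1}$.

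The key construction is to saturate $R$ in the first coordinate along these atoms. For $\alpha\in\{\mu,\nu\}$ set
$$
R_\alpha \;=\; \{(b_1,\dots,b_n) \mid \exists a_1\in A_1 \ \bigl((a_1,b_2,\dots,b_n)\in R \ \wedge\ (a_1,b_1)\in\alpha\bigr)\}.
$$
Since $\alpha$ is a congruence of $\algA_1$ it is an invariant binary relation, so $R_\alpha$ is pp-definable from $R$ and $\alpha$ and hence is again a subalgebra of $\algA_1\times\cdots\times\algA_n$; moreover $R\subseteq R_\alpha$ because $\alpha$ is reflexive. I then need two claims: (i) $R_\mu\supsetneq R$ and $R_\nu\supsetneq R$; and (ii) $R_\mu\cap R_\nu=R$. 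Granting both, $R=R_\mu\cap R_\nu$ is a proper intersection, which contradicts $\land$-irreducibility; the lemma then follows by symmetry in the coordinates.

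For (i), suppose $R_\mu=R$. Given any $(a_1,b_1)\in\mu$, subdirectness ($\proj_1(R)=A_1$) lets us pick $(a_1,c_2,\dots,c_n)\in R$, and saturation yields $(b_1,c_2,\dots,c_n)\in R$; thus $a_1$ and $b_1$ are linked over $R$ in the first coordinate, i.e. $(a_1,b_1)\in\theta_1$. As $R$ is reduced, $\theta_1=0_{\algA_1}$, so $a_1=b_1$ and $\mu=0_{\algA_1}$, contradicting that $\mu$ is a nontrivial atom; the argument for $\nu$ is the same. For (ii), let $(b_1,\dots,b_n)\in R_\mu\cap R_\nu$ with witnesses $a_1$ (so $(a_1,b_2,\dots,b_n)\in R$ and $(a_1,b_1)\in\mu$) and $a_1'$ (so $(a_1',b_2,\dots,b_n)\in R$ and $(a_1',b_1)\in\nu$). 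The tuples $(a_1,b_2,\dots,b_n)$ and $(a_1',b_2,\dots,b_n)$ lie in $R$ and agree on coordinates $2,\dots,n$, so $(a_1,a_1')\in\theta_1=0_{\algA_1}$, whence $a_1=a_1'$; then $(a_1,b_1)\in\mu\wedge\nu=0_{\algA_1}$, so $b_1=a_1$ and $(b_1,\dots,b_n)=(a_1,b_2,\dots,b_n)\in R$.

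I expect the only genuinely delicate point to be the identity in (ii): it uses essentially that $R$ is reduced (to make the two first-coordinate witnesses coincide) together with $\mu\wedge\nu=0_{\algA_1}$ (to make that common witness equal $b_1$); throughout, one should keep in mind that ``reduced'' is precisely the triviality of the linkedness relations $\theta_i$. It is worth noting that this argument uses nothing about the $k$-edge term beyond finiteness of the $\algA_i$ (which is what supplies atoms in their congruence lattices); the few-subpowers hypothesis is present only because that is the setting in which the lemma is subsequently applied.
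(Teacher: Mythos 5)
Your proof is correct. Note that the paper does not prove this lemma at all---it is imported verbatim from Kearnes and Szendrei---so your self-contained argument is genuinely additional content rather than a variant of an in-paper proof. The degenerate cases are handled properly (criticality excludes $|A_1|=1$ and $n=1$ via dummy variables), the saturations $R_\mu$, $R_\nu$ along two distinct atoms of $\Con(\algA_1)$ are invariant relations, claim (i) uses exactly subdirectness plus reducedness, and claim (ii) uses reducedness to merge the two witnesses and $\mu\wedge\nu=0_{\algA_1}$ to conclude; together these contradict $\land$-irreducibility. Two observations. First, your construction is essentially the same device the paper deploys in its proof of Lemma~\ref{lemma:linkedness}, namely the relation obtained by saturating $R$ on a block of coordinates along a congruence lying above the linkedness congruence; in fact the present lemma follows at once from Lemma~\ref{lemma:linkedness} with $I=\{i\}$, since for reduced $R$ one has $\theta_i=0_{\algA_i}$, and $\land$-irreducibility of $0_{\algA_i}$ in $\Con(\algA_i)$ is precisely subdirect irreducibility of $\algA_i$ (using subdirectness to identify $\proj_i(R)$ with $\algA_i$). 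Your route avoids the key-tuple machinery of that proof and is more elementary. Second, your closing remark is accurate and worth keeping: the argument uses only finiteness, subdirectness, reducedness, and criticality; neither the parallelogram property nor the few-subpowers hypothesis is needed (the former is only required to make $\theta_I$ a congruence for blocks $|I|\geq 2$, which you never use).
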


\begin{lemma} \label{lemma:linkedness}
    Let $\algA_1,\ldots,\algA_n$ be algebras with few subpowers, and let $R \leq_{sd} \algA_1 \times \ldots \times \algA_n$ be a critical relation with the parallelogram property. For $I \subset [n]$, let $\theta_I$ be the linkedness congruence on $\proj_I(R)$ with respect to $R$. Then $\theta_I$ is $\land$-irreducible.
\end{lemma}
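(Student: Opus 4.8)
The plan is to reduce the $\land$-irreducibility of $\theta_I$ in $\Con(\proj_I(R))$ to the $\land$-irreducibility of $R$ itself, by constructing a meet-preserving order-embedding of the congruence filter $\{\beta\in\Con(\proj_I(R)):\beta\supseteq\theta_I\}$ into the lattice of invariant relations above $R$, sending $\theta_I$ to $R$. (The cases $I=\emptyset$ and $I=[n]$ are degenerate, so assume $\emptyset\neq I\subsetneq[n]$.) Note that for $I=\{i\}$ and $R$ reduced we have $\theta_I=0_{\algA_i}$, and ``$0_{\algA_i}$ is $\land$-irreducible'' just says $\algA_i$ is subdirectly irreducible, so the statement specializes to Lemma~\ref{lemma:SI}.

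Write $P=\proj_I(R)$ and $Q=\proj_{[n]\setminus I}(R)$, and view $R$ as a subdirect relation $R\leq_{sd}P\times Q$ with the parallelogram property, so that $\theta_I$ is its linkedness congruence on $P$. For $\beta\in\Con(P)$ with $\beta\supseteq\theta_I$, put
$$R_\beta:=\beta\circ R=\{(p,q)\in P\times Q : (p,p_0)\in\beta\text{ and }(p_0,q)\in R\text{ for some }p_0\in P\}.$$
A relational composite of invariant relations is invariant, so $R_\beta\leq P\times Q$; permuting coordinates back, $R_\beta$ is a relation $\leq\algA_1\times\cdots\times\algA_n$ containing $R$. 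I would then establish three properties, each a short diagram chase: (i) $R_{\theta_I}=R$, where the inclusion ``$\subseteq$'' uses the parallelogram property (if $(p,q_1),(p_0,q_1)\in R$ witness $(p,p_0)\in\theta_I$ and $(p_0,q)\in R$, then $(p,q)\in R$); (ii) the linkedness congruence of $R_\beta$ on $P$ is again $\beta$, so $\beta\mapsto R_\beta$ is injective, and being clearly monotone it is an order-embedding, whence $\beta\supsetneq\theta_I$ forces $R_\beta\supsetneq R$; and (iii) $R_{\beta_1\cap\beta_2}=R_{\beta_1}\cap R_{\beta_2}$, the nontrivial inclusion following because if $(p,p_1)\in\beta_1$, $(p,p_2)\in\beta_2$, $(p_1,q),(p_2,q)\in R$, then $(p_1,p_2)\in\theta_I\subseteq\beta_1\cap\beta_2$, so $(p,p_1)\in\beta_1\cap\beta_2$ with $(p_1,q)\in R$.

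Granting (i)--(iii), suppose for contradiction that $\theta_I$ is $\land$-reducible in the finite lattice $\Con(P)$. Then $\theta_I$ equals the meet of the congruences covering it, of which there are at least two; say $\theta_I=\beta_1\cap\cdots\cap\beta_m$ with $m\geq2$ and each $\beta_j\supsetneq\theta_I$. By (i) and (iii), $R=R_{\theta_I}=R_{\beta_1}\cap\cdots\cap R_{\beta_m}$, and by (ii) each $R_{\beta_j}$ is a relation $\leq\algA_1\times\cdots\times\algA_n$ strictly larger than $R$; this contradicts $R$ being $\land$-irreducible. Hence $\theta_I$ is $\land$-irreducible.

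The crux --- and the only place real care is needed --- is properties (i)--(iii): one must verify that composing $R$ with a congruence \emph{above} its linkedness congruence is precisely what makes $\beta\mapsto R_\beta$ injective (it recovers $\beta$ as a linkedness congruence) and compatible with intersections. Each of these rests on a single application of the parallelogram property; the rest is formal lattice theory. Few subpowers plays no role here beyond ensuring that the algebras, hence $\Con(P)$, are finite, and is inherited from the hypotheses of Lemma~\ref{lemma:SI}.
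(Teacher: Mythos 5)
Your proof is correct, and all three verifications (i)--(iii) go through exactly as you sketch them. The core construction is the same as in the paper: both proofs hinge on the relation $R_{\theta'}=\theta'\circ R$ for a congruence $\theta'\supseteq\theta_I$ and on the fact that the linkedness congruence of $R_{\theta'}$ is again $\theta'$. The difference is in how $\land$-irreducibility is then extracted. The paper works with a \emph{key tuple}: it takes the unique cover $R^*>R$, a tuple $a\in R^*\setminus R$, and tuples of $R$ differing from $a$ in one coordinate, and shows that \emph{every} congruence $\theta'\supsetneq\theta_I$ must contain one specific pair $((a_1,\ldots,a_i),(b_1,a_2,\ldots,a_i))$ --- because $R_{\theta'}\supsetneq R$ forces $R_{\theta'}\supseteq R^*\ni a$. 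This exhibits the unique cover of $\theta_I$ concretely (as the congruence generated by $\theta_I$ together with that pair), which is more information than irreducibility alone. You instead package the correspondence $\beta\mapsto R_\beta$ as a meet-preserving order-embedding of the filter above $\theta_I$ into the subalgebra lattice above $R$ and transfer $\land$-reducibility abstractly; this avoids key tuples and the unique-cover characterization entirely, at the cost of the extra (but easy) verification (iii). Your observation that few subpowers is not really used is accurate for your argument; only $\land$-irreducibility, subdirectness, and the parallelogram property of $R$ are needed, plus finiteness of the lattices involved. Both routes are sound; yours is marginally more self-contained, the paper's yields the cover of $\theta_I$ explicitly.
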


\begin{proof} 
    Since $R$ is $\land$-irreducible, there is a unique cover $R^* > R$ in the lattice of all subalgebras of $\algA_1 \times \ldots \times \algA_n$. Any tuple $\bar a = (a_1,\ldots,a_n) \in R^* \setminus R$ is called a \emph{key tuple} of $R$ (cf.~\cite{zhuk_key_2017} and \cite[Lemma 2.1]{kearnes_clones_2012}). Note that for every $j = 1, \ldots, n$, there is a tuple $(a_1,\ldots b_j, \ldots ,a_n) \in R$ that only differs from $\bar a$ at position $j$. If this was not the case, say for $j=n$, then $\bar a\notin\proj_{[n-1]}R\times A_n$. Since $R\subseteq(\proj_{[n-1]}R\times A_n) \cap R^*\subsetneq R^*$, we get that $R=(\proj_{[n-1]}R\times A_n) \cap R^*$. This contradicts the criticality of $R$: either it has a dummy variable or it is an intersection of strictly bigger relations.
        
    For simplicity, let us assume that $I = \{1,2,\ldots,i\}$ with $i < n$. Then, the linkedness congruence $\theta_I$ has an equivalence class containing all elements of the form  $(a_1,\ldots b_j, \ldots ,a_i)$ for $j = 1,\ldots,i$. Note that $(a_1,a_2,\ldots,a_i) \in \proj_I(R)$ is not an element of this equivalence class.
    
    To prove that $\theta_I$ is $\land$-irreducible, let $\theta'$ be a congruence strictly above $\theta_I$. We claim that then $\theta'$ must also contain the pair $((a_1,a_2,\ldots,a_i), (b_1,a_2,\ldots,a_i))$. To prove the claim, consider the relation $R'$ given by the following pp-definition:
    $$
    \exists \bar y \left( \theta'(\proj_I \bar x,\bar y) \land R(\bar y, \proj_{[n]\setminus I} \bar x) \right).
    $$    
    As $R'$ properly contains $R$, it also must contain its cover $R^*$, and thus the key tuple $(a_1,a_2,\ldots,a_n)$. Moreover, the linkedness congruence of $R'$ on coordinates $I$ is equal to $\theta'$, thus $\theta'$ must contain the pair $((a_1,a_2,\ldots,a_i), (b_1,a_2,\ldots,a_i))$. So $\theta_I$ has a unique cover $\theta_I^*$, which is the congruence generated by $\theta_I\cup\{((a_1,a_2,\ldots,a_i), (b_1,a_2,\ldots,a_i))\}$. This finishes the proof.
\end{proof}

We are now ready to prove our main result:

\begin{theorem}\label{theorem:main-result}
    Let $\algA$ be an algebra with a $k$-edge term, and assume that $\HSP(\algA)$ is residually finite. Then $\algA$ has pp-definitions of length $O(n^{\max(2,k-1)})$.
\end{theorem}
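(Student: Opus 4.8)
Since a $k$-edge term is given by identities, it is satisfied by every algebra in the variety $\HSP(\algA)$; together with residual finiteness this means that, up to isomorphism, $\HSP(\algA)$ contains only finitely many subdirectly irreducible algebras, say $\mathcal S=\{\algS_1,\dots,\algS_m\}$, all finite and all carrying the $k$-edge term. The plan is to first pass to the multi-sorted constraint language $\Inv(\mathcal S)$ of all invariant relations whose coordinate sorts lie among $\algS_1,\dots,\algS_m$: since $\algA$ is a subdirect product of finitely many members of $\mathcal S$, a routine translation in the spirit of Lemmas~\ref{lemma:poweralg} and~\ref{lemma:HS} shows that pp-definitions of length $O(n^k)$ for $\Inv(\mathcal S)$ give pp-definitions of length $O(n^k)$ for $\algA$. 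By the multi-sorted versions of Lemmas~\ref{lemma:critical} and~\ref{lemma:reduced} (valid because all $\algS_j$ share a $k$-edge term) and since $k\geq 2$, it then suffices to produce a constant $d$ such that every \emph{reduced}, subdirect, critical relation $R\leq_{sd}\algA_1\times\dots\times\algA_n$ with the parallelogram property and each $\algA_i\in\HS(\mathcal S)$ has a pp-definition of length at most $d\cdot n$; indeed then $n\cdot dn=O(n^k)$.

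For the remaining claim I would show that such an $R$ is recognized by a bounded-size ``deterministic automaton'' scanning the coordinates from left to right. By Lemma~\ref{lemma:SI} every $\algA_i$ is subdirectly irreducible, hence isomorphic to some $\algS_j$ and of size at most $N:=\max_j|\algS_j|$. For $0\leq i\leq n-1$ put $I_i=\{1,\dots,i\}$ and let $\theta_i$ be the linkedness congruence of $R$ on $\proj_{I_i}(R)$; by Lemma~\ref{lemma:linkedness} each $\theta_i$ is $\land$-irreducible, so the ``state algebra'' $Q_i:=\proj_{I_i}(R)/\theta_i$ is subdirectly irreducible, hence again isomorphic to some $\algS_j$ and of size $\leq N$. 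The crucial step is to verify, using only the parallelogram property of $R$, that the $\theta_i$-class of a tuple of $\proj_{I_i}(R)$ is already determined by the $\theta_{i-1}$-class of its length-$(i-1)$ prefix together with its $i$-th entry. Granting this,
\[
T_i=\{\,([a_1,\dots,a_{i-1}]_{\theta_{i-1}},\,a_i,\,[a_1,\dots,a_i]_{\theta_i})\mid (a_1,\dots,a_i)\in\proj_{I_i}(R)\,\}
\]
is the graph of a partial transition map $Q_{i-1}\times\algA_i\to Q_i$ and, being a homomorphic image of $\proj_{I_i}(R)$, is an invariant relation over the sorts $Q_{i-1},\algA_i,Q_i$, all of size $\leq N$. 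A variant of the same argument shows that membership of a full tuple in $R$ is decided by the state reached after $n-1$ coordinates together with the last entry, i.e.\ by the invariant relation $T_{\mathrm{end}}=\{([a_1,\dots,a_{n-1}]_{\theta_{n-1}},a_n)\mid (a_1,\dots,a_n)\in R\}\leq Q_{n-1}\times\algA_n$.

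It then remains to assemble the definition and verify it. Introducing auxiliary variables $y_0,\dots,y_{n-1}$ of sorts $Q_0,\dots,Q_{n-1}$, the pp-formula
\[
\phi(x_1,\dots,x_n)\;=\;\exists y_0\dots\exists y_{n-1}\;\; U(y_0)\;\land\;\bigwedge_{i=1}^{n-1}T_i(y_{i-1},x_i,y_i)\;\land\;T_{\mathrm{end}}(y_{n-1},x_n),
\]
where $U$ names the unique element of the one-element algebra $Q_0$, has $O(n)$ atoms of bounded arity, hence length $O(n)$; moreover, since there are only finitely many isomorphism types of sorts of size $\leq N$ and finitely many invariant relations on each small product of them, every symbol occurring in $\phi$ comes from a single finite multi-sorted constraint language depending only on $\algA$. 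That $\phi$ defines $R$ is the last check: every tuple of $R$ forces the uniquely determined state sequence $y_i=[x_1,\dots,x_i]_{\theta_i}$ and satisfies $\phi$; conversely, functionality of the $T_i$ forces the $y_i$ along any witnessing assignment, an induction using the parallelogram property of $R$ shows that firing of $T_1,\dots,T_i$ forces $(x_1,\dots,x_i)\in\proj_{I_i}(R)$ with $y_i=[x_1,\dots,x_i]_{\theta_i}$, and then $T_{\mathrm{end}}(y_{n-1},x_n)$ holds precisely when $(x_1,\dots,x_n)\in R$. Feeding $d\cdot n$ into the multi-sorted version of Lemma~\ref{lemma:reduced} yields pp-definitions of length $c(n^k+n\cdot dn)=O(n^k)$ for $\Inv(\mathcal S)$, and the translation back to $\algA$ preserves this bound.

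The step I expect to be the main obstacle is the ``deterministic automaton'' analysis above: proving from the parallelogram property (and criticality, via Lemma~\ref{lemma:linkedness}) that the linkedness class of a prefix plus one further coordinate determines the next linkedness class, and that $T_{\mathrm{end}}$ correctly captures membership. Residual finiteness of $\HSP(\algA)$ is used precisely to make the state algebras $Q_i$ \emph{uniformly} bounded in size, so that the transition relations $T_i$ range over a fixed finite pool; without such a uniform bound $\phi$ would require an unbounded constraint language and the estimate would break down.
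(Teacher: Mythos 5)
Your proposal is correct and follows essentially the same route as the paper: the same reduction chain (Lemmas~\ref{lemma:poweralg}, \ref{lemma:HS}, \ref{lemma:critical}, \ref{lemma:reduced}) down to reduced critical relations with the parallelogram property, and then the same key mechanism — Lemma~\ref{lemma:linkedness} plus residual finiteness to keep the linkedness quotients within a fixed finite pool of subdirectly irreducible sorts, yielding a linear chain of bounded-arity relations. Your "automaton" is just the unrolled form of the paper's induction (which peels off one coordinate at a time via the ternary relation $Q(x_1,x_2,y_{1,2})$ and recurses on $R'$); if anything, applying Lemma~\ref{lemma:linkedness} directly to each prefix $I_i$ of the original $R$ is a slightly cleaner way to organize the same argument.
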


\begin{proof}
    Let $\mathcal V_{SI}$ consist of all subdirectly irreducible elements of $\HSP(\algA)$. Since $\HSP(\algA)$ is residually finite, $\mathcal V_{SI}$ contains up to isomorphism only finitely many algebras, all of which are finite. In particular, all members of $\mathcal V_{SI}$ are already contained (up to isomorphism) in $\HSfactors(\algA^l)$, for some finite power $l$.
    
    By Lemma \ref{lemma:poweralg}, it is enough to prove pp-definitions of length $O(n^{k})$ for $\algA^l$. By Lemma~\ref{lemma:reduced}, it suffices to prove that every reduced, critical relation $R \leq_{sd} \algA_1 \times \ldots \times \algA_n$ with $\algA_i \in \HSfactors(\algA^l)$
    that has the parallelogram property, has a (multi-sorted) pp-definition of linear length.
    
    Let $\Gamma$ be the set of all at most ternary invariant relations over $\HSfactors(\algA^l)$. We construct a pp-definition of $R$ from $\Gamma$ of length linear in $n$, by induction on $n$. For $n \leq 3$, $R$ itself is in~$\Gamma$.
    
    For general $R \leq_{sd} \algA_1 \times \ldots \times \algA_n$ with the parallelogram property, recall the definition of the linkedness congruence $\theta_I$. We then define the algebra $\algA_{1,2} = \proj_{\{1,2\}}(R)/\theta_{\{1,2\}}$, and the following relations:
    \begin{align*}
        Q &= \{(x_1,x_2,y_{1,2}) \in A_1 \times A_2 \times A_{1,2} \mid y_{1,2} = (x_1,x_2)/\theta_{{1,2}} \}\\
        R' &= \{(y_{1,2},x_3,\ldots,x_n) \mid \exists x_1,x_2 \left( Q(x_1,x_2,y_{1,2}) \land R(x_1,x_2,x_3, \ldots, x_n) \right) \}
    \end{align*}
    Note that $Q \leq \algA_1 \times \algA_{2} \times \algA_{1,2}$, and $R' \leq \algA_{1,2} \times \algA_3 \times \cdots \times \algA_n$. Since $R$ has the parallelogram property, $R$ can be defined by the following pp-formula (see Figure~\ref{figure:similarity-construction}): 
    $$
    (\exists y_{1,2} \in \algA_{1,2}) \left( Q(x_1,x_2,y_{1,2}) \land R'(y_{1,2},x_3,\ldots,x_n) \right)
    $$

    \begin{figure}
        \centering
        \begin{tikzpicture}[scale=1.75]
            \filldraw [color=black!65] (2.9,-1) circle (1pt) node[left]{\large $y_{1,2}$};
            \draw [rounded corners=1em, color=black!65](2.4,0.2) rectangle (3.4,-2.2); 
            \draw (2.9,0.5) node {$\algA_{1,2}$};   
            \draw [line width=2pt, dotted, color=black!65] plot [smooth, tension=1.75] coordinates {(0.5,-0.5) (1.6,-0.5) (2.9,-1)};
            \draw [line width=2pt, dotted, color=black!65] plot [smooth, tension=1.75] coordinates {(0.5,-1.5) (1.6,-1.5) (2.9,-1)};
            
            \draw [rounded corners=1em](0,0) rectangle (1,-2); 
            \draw [rounded corners=1em](1.2,0) rectangle (2.2,-2); 
            \draw [rounded corners=1em](3.6,0) rectangle (4.6,-2); 
            \draw [rounded corners=1em](5.6,0) rectangle (6.6,-2); 
        
            \filldraw (4.2,-1) circle (1pt) node[above]{$x_3$};
            \filldraw (6.2,-1) circle (1pt) node[above]{$x_n$};
            \draw [thick, decorate, decoration={snake,amplitude=0.8}] (4.2,-1) -- node[above]{$\cdots$} (6.2,-1);
        
            \filldraw (0.5,-0.5) circle (1pt) node[above]{$x_1$};
            \filldraw (1.6,-0.5) circle (1pt) node[above]{$x_2$};
            \draw [thick] (0.5,-0.5) -- (1.6,-0.5);
        
            \filldraw (0.5,-1.5) circle (1pt) node[below]{$x_1'$};
            \filldraw (1.6,-1.5) circle (1pt) node[below]{$x_2'$};
            \draw [thick] (0.5,-1.5) -- (1.6,-1.5);    
        
            \draw [thick] (1.6,-1.5) -- (4.2,-1);
            \draw [thick] (1.6,-0.5) -- (4.2,-1); 
        
            \draw[thick, decorate, decoration={brace,amplitude=5}] (3.4,-2.3) -- node[below=5pt] {$Q$} (0,-2.3);  
            \draw [line width=2pt, dotted, color=darkgray] (2.9,-1) -- (4.2,-1); 
            \draw[thick, decorate,decoration={brace,amplitude=5}] (6.6,-2.5) -- node[below=5pt] {$R'$} (2.4,-2.5);
        \end{tikzpicture}

        \caption{The construction from the proof of Theorem~\ref{theorem:main-result}}
        \label{figure:similarity-construction}
    \end{figure}
    
    By Lemma \ref{lemma:linkedness}, the linkedness congruence $\theta_{\{1,2\}}$ is $\land$-irreducible. This implies that $\algA_{1,2}$ is subdirectly irreducible and hence an element of $\mathcal V_{SI} \subseteq \HSfactors(\algA^l)$. In particular, this means that $Q$ is a relation from our relational basis $\Gamma$. The relation $R'$ is of arity $n-1$, and thus, by induction assumption, has a pp-definition of linear length. This finishes our proof.
\end{proof}

Note that, although in the proof of Theorem \ref{theorem:main-result} we found a \emph{ternary} multi-sorted constraint language $\Gamma$ defining the reduced critical relations $R \leq_{sd} \algA_1 \times \ldots \times \algA_n$ with the parallelogram property, the same may not be true for the original algebra~$\algA$. Our construction only leads to an upper bound of $3\ell$ on the arity, where $\ell$ is such that all subdirectly irreducible elements of $\HSP(\algA)$ are, up to isomorphism, contained in $\HSfactors(\algA^\ell)$. In the general case, we are not aware of any better bounds for $\ell$ than the double exponential bound $\ell \leq |A|^{|A|^{|A|+1}+1}$ provided in \cite[Theorem 10.15]{freese_commutator_1987} (see also \cite[Theorem A.5.27]{brady_notes_2025}). But, we remark that in some cases much better lower bounds on the arity of relational bases are known (e.g. for Mal'tsev algebras on 3-elements \cite{bulatov_three_2005}).

As an immediate consequence of Theorem \ref{theorem:main-result} we get that every 3-element algebra with few subpowers has short pp-definitions, confirming Conjecture~\ref{conjecture:short-definitions-iff-few-subpowers} for the 3-element case. 

\begin{corollary} \label{cor:3}
    Let $\Gamma$ be a constraint language on a 3-element domain. Then $\Gamma$ has short pp-definitions if and only if $\Gamma$ has few subpowers. More precisely, $\Gamma$ has pp-definitions of length $O(n^{\max(2,k-1)})$, where $k$ is the minimal number such that $\Gamma$ has a $k$-edge polymorphism.
\end{corollary}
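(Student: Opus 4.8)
The plan is as follows. For the ``only if'' direction I would simply record the cardinality argument already indicated in the introduction, which does not even use $|A|=3$: a pp-formula of length at most $p(n)$ contains at most $p(n)$ distinct variable symbols and is built over the fixed finite signature $\Gamma\cup\{=_A\}$, so the number of such formulas is at most $2^{q(n)}$ for some polynomial $q$; since every $R\in\langle\Gamma\rangle_n$ has such a definition, $|\langle\Gamma\rangle_n|\leq 2^{q(n)}$, i.e.\ $\Gamma$ has few subpowers. (Equivalently, one bounds the number of conjunctions of atomic formulas over $n+p(n)$ variables.)

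The substance is the ``if'' direction together with the explicit $c\cdot n^k$ bound. First I would pass to the polymorphism algebra $\algA=\Pol(\Gamma)$, which has three-element domain $A$ and satisfies $\Inv(\algA)=\langle\Gamma\rangle$ by the $\Inv$--$\Pol$ Galois connection. Since $\Gamma$ has few subpowers, Theorem~\ref{theorem:few_subpowers_iff_edge_term} provides a $k$-edge polymorphism $e\in\Pol(\Gamma)$, and I would take $k$ to be minimal with this property, as that is the $k$ appearing in the statement. Next I would observe that the $k$-edge identities are preserved under $\HHH$, $\SSS$, and $\PPP$, so every algebra in $\HSP(\algA)$ has a $k$-edge term; by the standard fact that a variety with an edge term is congruence modular (see \cite{berman_varieties_2010}), $\HSP(\algA)$ is congruence modular.

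Now, since $|A|=3$ and $\HSP(\algA)$ is congruence modular, Theorem~\ref{theorem:3} yields that $\HSP(\algA)$ is residually finite, so $\algA$ meets the hypotheses of Theorem~\ref{theorem:main-result} for the same $k$; applying it shows that $\algA$ has pp-definitions of length $c\cdot n^k$ for some $c>0$. It remains to transport this from the abstract algebra to the prescribed basis $\Gamma$: by definition there is a relational basis $\Delta$ of $\Inv(\algA)=\langle\Gamma\rangle$ with pp-definitions of length $c\cdot n^k$, and since $\langle\Delta\rangle=\langle\Gamma\rangle$, Lemma~\ref{lemma:ppinterdef} upgrades this to pp-definitions of $\Gamma$ of length $c'\cdot n^k$ for some $c'>0$. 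In particular $\Gamma$ has short pp-definitions, which together with the first paragraph gives the equivalence.

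I expect the only non-mechanical point to be the implication ``few subpowers $\Rightarrow$ $\HSP(\algA)$ congruence modular'', which is exactly what licenses the use of Theorem~\ref{theorem:3}: here one must invoke that possessing a $k$-edge term is itself a congruence-modularity Mal'cev condition, rather than merely a consequence of the counting property of having few subpowers. Everything else is bookkeeping through the Galois connection, Lemma~\ref{lemma:ppinterdef} (to move from the algebra back to the given basis $\Gamma$), and the quantitative statements already established in Theorems~\ref{theorem:main-result} and~\ref{theorem:3}.
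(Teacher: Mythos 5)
Your proposal is correct and follows essentially the same route as the paper's proof: pass to $\algA=\Pol(\Gamma)$, use the $k$-edge term to get congruence modularity of $\HSP(\algA)$ via \cite{berman_varieties_2010}, invoke Theorem~\ref{theorem:3} for residual finiteness, and apply Theorem~\ref{theorem:main-result}. You are merely more explicit than the paper on two points it leaves implicit — the cardinality argument for the converse direction and the use of Lemma~\ref{lemma:ppinterdef} to transfer the bound from the algebra's relational basis back to $\Gamma$ — both of which are correct.
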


\begin{proof}
    Let us assume that $\Gamma$ is a constraint language with a $k$-edge polymorphism, and let $\algA = (A;\Pol(\Gamma))$ be its polymorphism algebra. The existence of an edge polymorphism implies that $\HSP(\algA)$ is congruence modular \cite[Theorem 4.2]{berman_varieties_2010}. It is known that every algebra with at most 3 elements in a congruence modular variety generates a residually small subvariety: This fact follows from the Freese–McKenzie characterization of finitely generated, residually small, congruence modular varieties in \cite{freese_residually_1989}, for a proof see e.g. \cite[Corollary A.5.31.]{brady_notes_2025}. Thus $\HSP(\algA)$ is residually finite. By Theorem \ref{theorem:main-result}, $\algA$ (and therefore also $\Gamma$) has pp-definitions of length $O(n^{\max(2,k-1)})$.

    If $\Gamma$ has few subpowers, then by Theorem \ref{theorem:few_subpowers_iff_edge_term}, it has a $k$-edge polymorphism for some $k$. Thus $\Gamma$ has short pp-definitions if it has few subpowers.
\end{proof}

We furthermore know (see e.g. \cite{freese_residually_1989}) that finite algebras that are either affine or have an NU term, generate residually finite varieties. Thus, we also get Theorem \ref{theorem:affine} and Theorem \ref{theorem:NU} as direct corollaries of Theorem \ref{theorem:main-result}.

\section{The subpower membership problem}\label{section:SMP}

The \emph{Subpower Membership Problem $\SMP(\algA)$} of a finite algebra $\algA$ is the computational problem in which the input consists of a list of tuples $\bar b, \bar a_1, \ldots, \bar a_k \in A^n$, for arbitrary $n \geq 1$, and one needs to decide whether $\bar b$ lies in the subalgebra $\Sg_{\algA^n}\{\bar a_1, \ldots, \bar a_k\}$ generated by $\bar a_1, \ldots, \bar a_k$, i.e., in the smallest $R \leq \algA^n$ that contains all the tuples $\bar a_1, \ldots, \bar a_k$.

The existence of an efficient algorithm for $\SMP(\algA)$ implies that it is feasible to represent the relations in $\Inv(\algA)$ by some generating set of tuples. In particular, in the context of constraint satisfaction, it was remarked by several authors (see, e.g.,~\cite{bulatov_subpower_2019}) that a polynomial-time algorithm for $\SMP(\algA)$ would allow us to define constraint satisfaction problems over \emph{infinite} constraint languages $\Gamma \subseteq \Inv(\algA)$, where the constraint relations in $\Gamma$ are encoded by generating tuples. In \cite{idziak_tractability_2010}, any algebra $\algA$ with $\SMP(\algA)$ in P was referred to as \emph{polynomially evaluable}.

While there are algebras for which $\SMP(\algA)$ is EXPTIME-complete \cite{kozik_finite_2008}, it was asked in \cite[Question 3]{idziak_tractability_2010} whether all algebras with few subpowers are polynomially evaluable. An affirmative answer was given for several special cases \cite{mayr_subpower_2012, bulatov_subpower_2019}, but the question still remains open in general. 

The best general upper bound on the complexity of $\SMP(\algA)$ for algebras with few subpowers is $\NP$ \cite{bulatov_subpower_2019}. This bound is based on the fact that, for a tuple $\bar b$, and a compact representation $G$ of $R \leq \algA^n$, we can verify in polynomial time whether $\bar b$ is generated by $G$. Thus, compact representations give rise to polynomial-time witnesses for `Yes'-instances of $\SMP(\algA)$. The difficulty in finding \emph{deterministic} polynomial algorithms for $\SMP(\algA)$ lies in efficiently computing such compact representations $G$ from an arbitrary generating set $\{\bar a_1, \ldots, \bar a_k\}$ of $R$.

Note that for an algebra $\algA$ with $\Inv(\algA) = \langle \Gamma \rangle$, the \emph{non}-membership of a tuple $\bar b$ in a relation $R=\Sg_{\algA^n}\{\bar a_1, \ldots, \bar a_k\}$ can be certified by a pp-formula $\phi(\bar x)$ over $\Gamma$, such that $\phi$ holds for all tuples $\bar a_1, \ldots, \bar a_k$, but not for $\bar b$.

If $\Gamma$ has short pp-definitions, we can guess such a certificate $\phi$ for `No'-instances of $\SMP(\algA)$, and verify it in polynomial time. (To elaborate, we can consider $\phi$ as an instance of $\CSP(\Gamma)$, fix the values of the free variables to the corresponding values from the given tuple, and verify that the CSP algorithm \cite{bulatov_dichotomy_2017,zhuk_proof_2017}) answers `yes' for the tuples $\bar a_i$ but `no' if the tuple is $\bar b$.)

As a direct consequence of this (and the fact that short pp-definitions imply few subpowers), we obtain the following: 

\begin{theorem} \label{theorem:SMP}
    Let $\algA$ be an algebra with short pp-definitions. Then $\SMP(\algA) \in \NP \cap \coNP$.
\end{theorem}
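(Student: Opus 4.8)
The plan is to certify `No'-instances of $\SMP(\algA)$ by a short pp-formula, the `Yes'-instances being already handled in the literature. Since $\algA$ has short pp-definitions, the cardinality argument from the introduction shows it has few subpowers, so $\SMP(\algA)\in\NP$ by the compact-representation argument of \cite{bulatov_subpower_2019}. It therefore remains to place $\SMP(\algA)$ in $\coNP$, i.e.\ to show that deciding ``$\bar b\notin\Sg_{\algA^n}(\bar a_1,\dots,\bar a_k)$'' lies in $\NP$.

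For this, I would first fix (using \cite{aichinger_number_2014} and Lemma~\ref{lemma:ppinterdef}) a finite relational basis $\Gamma$ of $\Inv(\algA)$ having pp-definitions of length at most $p(n)$ for a polynomial $p$. Let $\Gamma^+=\Gamma\cup\{=_A\}\cup\{\{a\}\mid a\in A\}$; since the $k$-edge term of $\algA$ is idempotent it is a polymorphism of every singleton relation, so $\Gamma^+$ still has few subpowers and hence $\CSP(\Gamma^+)\in\mathrm{P}$ by Theorem~\ref{theorem:few_subpowers_iff_edge_term}. Given an instance $\bar b,\bar a_1,\dots,\bar a_k\in A^n$, put $R=\Sg_{\algA^n}(\bar a_1,\dots,\bar a_k)\in\langle\Gamma\rangle_n$. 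The guessed certificate for ``$\bar b\notin R$'' is a pp-formula $\phi=\exists\bar y\,\psi(x_1,\dots,x_n,\bar y)$ over $\Gamma$, with $\psi$ a conjunction of atoms and $|\phi|\le p(n)$, namely a short pp-definition of $R$. The verifier then (i) checks, for each $i$, that the $\CSP(\Gamma^+)$ instance obtained from $\psi$ by pinning $\bar x$ to $\bar a_i$ and keeping $\bar y$ as variables is satisfiable, and (ii) checks that the $\CSP(\Gamma^+)$ instance obtained by pinning $\bar x$ to $\bar b$ is \emph{un}satisfiable; both steps run in polynomial time because $\CSP(\Gamma^+)\in\mathrm{P}$ (for (i) one may instead put satisfying assignments for the $\bar a_i$ directly into the certificate and check the atoms). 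Soundness: if both checks succeed then the relation $R_\phi$ defined by $\phi$ is an invariant relation containing $\bar a_1,\dots,\bar a_k$, so $R\subseteq R_\phi$, and $\bar b\notin R_\phi$ forces $\bar b\notin R$. Completeness: if $\bar b\notin R$, then the short pp-definition of $R$ itself is a valid certificate. Hence the complement of $\SMP(\algA)$ is in $\NP$, so $\SMP(\algA)\in\NP\cap\coNP$.

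The only step that goes beyond routine bookkeeping — and the part I expect to need the most care — is clause (ii): ``$\bar b\notin R_\phi$'' is syntactically a universally quantified statement about $\phi$, and so is not obviously $\NP$-checkable by guessing witnesses. The resolution is to use the few-subpowers hypothesis a second time: $\Gamma$ has a tractable CSP, so the relevant unsatisfiability is decidable deterministically in polynomial time. The passage to the idempotent expansion $\Gamma^+$ is exactly what allows the constant-pinned version of $\phi$ to be fed into that algorithm while staying inside a few-subpowers language.
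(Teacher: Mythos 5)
Your proposal is correct and follows essentially the same route as the paper: certify `No'-instances by guessing a short pp-formula that holds on all generators but fails on $\bar b$ (whose existence is exactly what short pp-definitions guarantee, taking $\phi$ to be a definition of the generated relation), and get the $\NP$ upper bound from few subpowers via compact representations. Your explicit treatment of the verification step --- in particular that checking $\bar b\notin R_\phi$ reduces to solving an instance of $\CSP(\Gamma^+)$ with constants, which is tractable since the idempotent edge term preserves singletons --- is a detail the paper leaves implicit in its phrase ``verify it in polynomial time,'' and it is handled correctly.
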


In particular, an affirmative answer to Conjecture~\ref{conjecture:short-definitions-iff-few-subpowers} would imply that the problem $\SMP(\algA)$ is in $\NP \cap \coNP$ for every algebra $\algA$ with few subpowers. Moreover, if one could efficiently compute the certificate, i.e., construct a short pp-definition of a relation from its generators, we would have even a polynomial-time algorithm for $\SMP(\algA)$; see Question~\ref{question:ppp} below.

Let us note, however, that in the setting of our main result (Theorem \ref{theorem:main-result}), this approach does not provide any progress on the state-of-art of the subpower membership problem. Indeed, it was already shown in \cite{bulatov_subpower_2019} that $\SMP(\algA)$ is in P for every algebra $\algA$ with few subpowers that generates a residually finite variety.

\section{Discussion} \label{section:discussion}

In this section, we propose several possible directions of further research towards expanding the scope of Theorem~\ref{theorem:main-result} beyond the property of residual finiteness, as well as strengthening its statement.

\subsection{Beyond residual finiteness}

By Theorem \ref{theorem:main-result}, constraint languages with few subpowers, whose polymorphism algebra generates a residually finite variety, have short pp-definitions. While this confirms Conjecture~\ref{conjecture:short-definitions-iff-few-subpowers} for a large subclass of constraint languages, much work remains to prove the conjecture in full generality. As discussed in Section \ref{section:SMP}, it is also essential for progress on the Subpower Membership Problem to extend our results to algebras that do not generate residually finite varieties.

While we do not present any general results beyond the context of residual finiteness in this paper, we are aware of singular such examples of algebras that generate non-residually finite varieties but still have short pp-definitions thus providing some further evidence in favor of Conjecture~\ref{conjecture:short-definitions-iff-few-subpowers}. Let us give one such example here, a 4-element algebra that was described by Brady in \cite[Example 2.3.2.]{brady_notes_2025}.

\begin{example} \label{example:zeb}
Let $U=L=\{0,1\}$, and let $\alg A=(U\times L;g)$ where $g$ is defined by 
$$g \left(\begin{pmatrix} u_1 \\ \ell_1\end{pmatrix}, \begin{pmatrix} u_2 \\ \ell_2\end{pmatrix}, \begin{pmatrix} u_3 \\ \ell_3\end{pmatrix} \right) = \begin{pmatrix} \mathrm{maj}(u_1,u_2,u_3) \\ \ell_1+\ell_2+\ell_3 + \hat g(u_1,u_2,u_3) \end{pmatrix},$$
where $+$ denotes the addition modulo 2 and $\hat g\colon U^3\to L$ is a symmetric operation with $\hat g(0,1,1) = 1$ and $\hat g(u_1,u_2,u_3) = 0$ else. Then $\HSP(\algA)$ is not residually finite, and $\algA$ has pp-definitions of length $O(n^2)$.
\end{example}

\begin{proof}
Note that by the map 
$$a \mapsto \binom00, b \mapsto \binom10, c \mapsto \binom01, d \mapsto \binom11,$$
 $\algA$ is an isomorphic copy of the algebra $\mathbb A$ described in \cite[Example 2.3.2.]{brady_notes_2025}. The fact that $\algA$ has pp-definitions of length $O(n^2)$ can already be seen from a very careful read of \cite[Example 2.3.2.]{brady_notes_2025}, but we will discuss the example in more detail for the reader's convenience. We first remark that 
\begin{itemize}
\item $\algA$ has a 3-edge term defined by $$e(u, x, y, z) = g(x, g(u, y, y), g(y, g(x, y, z), g(x, y, z))).$$
\item The map $$\sigma \binom u \ell = \binom{u+1}{\ell+u}$$ is an automorphism of $\algA$ (thus its function graph is invariant under $\algA$).
\item The only non-trivial congruence $\mu$ of $\algA$ is the kernel of the projection to $U$.
\item The congruence $\mu$ is central (in the sense of commutator theory), but $\algA$ is not affine. Therefore, by the characterization in \cite[Theorem 10.15]{freese_commutator_1987}, $\algA$ does not generate a residually finite variety.
\end{itemize}
It is furthermore not hard to see that the set of factors $\HSfactors(\algA)$ consists only of $\algA$, the Boolean majority algebra $\alg{U} = \algA/\mu$ and the two Abelian subalgebras $\alg{L}_0,\alg{L}_1$ on domains $\{0\} \times L$ and $\{1\} \times L$, which are both isomorphic to $(L, x+y+z)$. In particular, the graph of the isomorphism $\algL_1 \cong \algL_2$ is a binary invariant relation under $\HSfactors(\algA)$. In the light of Lemma \ref{lemma:reduced}, it is enough to prove that any reduced critical relation $R \leq_{sd} \algA^{m} \times \algL_0^{r_0} \times \algL_1^{r_1} \times \algU^s$ of arity $n = m+r_0+r_1 +s \geq 3$ with the parallelogram property has a pp-definition of length $O(n)$. So, in the following let $R$ be such a relation.

We know that $s=0$; this follows from \cite[Theorem 2.5 (7)]{kearnes_clones_2012}, and the fact that $\algU$ is simple and non-affine (or from a straightforward computation, using the fact that $\algU$ is the majority algebra). Thus $R \leq_{sd} \algA^{m} \times \algL_0^{r_0} \times \algL_1^{r_1}$. By composing $R$ with the isomorphism that maps $\algL_1$ to $\algL_0$ in each of the last $r_1$ coordinates, we further see that $R$ is interdefinable with a relation $R' \leq_{sd} \algA^{m} \times \algL_0^{r_0+r_1}$ via a pp-formula of length $O(r_1)$. Thus without loss of generality we can assume that $r_1 = 0$, and write $r_0 = r$. Note that, if $m=0$, we can already find a linear pp-definition of $R$ from ternary relations, by Example \ref{example:lin}. In fact, the same argument also works for all relations with $m\leq 2$, so from now on let us assume that $m \geq 3$.

Instead of $R$ we can furthermore consider its compositions with powers of $\sigma$ in any of the first $m$ coordinates. Therefore, without loss of generality, we can assume that the constant $\binom00$-tuple is in $R$. 

It follows from \cite[Theorem 2.5]{kearnes_clones_2012} (see also \cite[Theorem 2.3.10]{brady_notes_2025}) that the unique cover $R^*$ of $R$ in the lattice of subalgebras of $\algA^m \times \algL_0^r$ has the parallelogram property, and that its linking congruences are $\mu$ in the first $m$ coordinates, and $1_{\algL_0}$ in the last $r$ coordinates. Thus the reduced relation $R^*/(\mu, \ldots, \mu, 1_{\algL_0}\ldots  1_{\algL_0}) \leq_{sd} \algU^m$ has the parallelogram property. By the properties of the majority algebra $\algU$ the only such relations that contain a constant tuple are those definable by a conjunction of equalities. Thus, for any two coordinates $1\leq i < j \leq m$, the projection $\proj_{i,j}R^* \leq_{sd} \algA^2$ is either equal to $\mu$ or $A^2$. Note further that $\proj_{i,j}R = \proj_{i,j}R^*$, as we could otherwise write $R$ as conjunction of $R^*$ and $\proj_{i,j}R$, implying that $R$ is not critical.

Let us write $i \sim j$ if $\proj_{i,j}R = \mu$. This clearly defines an equivalence relation on the coordinates in $[m]$. Let us further write $R \leq_{sd} \algA^{m_1} \times \algA^{m_2} \times \ldots \times \algA^{m_k} \times \algL_0^r$, where this subdivision of $[m]$ corresponds to the equivalence classes of $\sim$.

Note that, since $R$ contains the constant $\binom00$-tuple, it must also be invariant under the map $$\phi\binom{u}{\ell} =  g\left(\binom00,\binom{u}{\ell},\binom{u}{\ell}\right) = \binom{u}{u},$$ which sends $L_0$ to $\binom00$, and $L_1$ to $\binom11$. From this, and the fact that $\proj_{i,j}R = A^2$ for $i\not\sim j$, it can be derived that $R$ contains the set $P = \{ \bar x \in A^m \times \{\binom00\}^r \mid x_i = x_j \text{ if } i \sim j \}$ of piece-wise constant tuples.

Moreover, observe that $R_0 = R \cap \algL_0^{m+r}$ is a linear subspace of $\algL_0$ and, for every tuple $i = (i_1,i_2,\ldots,i_k) \in \{0,1\}^k$, every subspace $R_i = R \cap \algL_{i_1}^{m_1} \times \ldots \algL_{i_k}^{m_k} \times \algL_0^{r}$ is isomorphic to $R_0$ by the map $\psi \colon A \to L$ defined by $$\psi \binom{u}{\ell} =  g\left( \binom00, \binom00, \binom{u}{\ell} \right) = \begin{pmatrix}0 \\ \ell\end{pmatrix}.$$ We conclude that $R$ is the disjoint union of components $R_i$, which are all equal to the linear space $R_0 \leq \algL_0^{m+r}$, modulo relabeling the $U$-component in the $\sim$-blocks. Conversely, note that for every $Q_0 \leq \algL_0^{m+r}$ that contains $P\cap \algL_0^{m+r}$ we can define $Q = \bigcup_{i \in \{0,1\}^k} Q_i$, where $Q_i = \psi^{-1}(Q_0) \cap (L_{i_1}^{m_1} \times \ldots L_{i_k}^{m_k} \times L_0^{r})$, which is also an invariant relation $Q \leq \algA^m \times \algL_0^r$

Since $R$ is critical, this implies that $m_1 = \ldots = m_k = 2$, and $R_0$ is given by the equation $\sum_{i=1}^{m+r} \ell_i = 0$ (otherwise it could be written as a conjunction of relations of the form $Q$ and $\mu$ constraints). It is now not hard to see that we can construct a pp-formula of length $O(n) = O(m+r)$ of $R$ from the ternary relations $$\left\{\left(\binom{u}{\ell_1}, \binom{u}{\ell_2},\binom{0}{\ell_3}\right) \in A^3 \mid \ell_1 + \ell_2 + \ell_3 = 0 \right\},$$ and an argument as in Example \ref{example:lin}. From Lemma \ref{lemma:reduced} we get that $\algA$ has pp-definitions of length $O(n^2)$.
\end{proof}

Let us note here that the condition of residual finiteness has traditionally not played a major role in the study of constraint satisfaction problems, being more commonly encountered in purely algebraic contexts.

\subsection{Short definitions under pp-interpretability}

A natural direction for connecting short pp-definitions more closely with the theory of constraint satisfaction is to explore whether our results extend to tractability classes that figure prominently in the algebraic approach to the CSP, such as constraint languages with Mal'tsev polymorphisms. For that purpose, establishing invariance of the property under pp-interpretations seems important and could facilitate further progress.

As already mentioned, pp-interpretations are a generalization of pp-definitions, that describe gadget reductions between constraint languages on different domains. Many standard tractability classes, such as few subpowers, Mal'tsev, near-unanimity, are closed under pp-interpretations. This motivates the following question:

\begin{question} \label{question:interpretation}
    Let $\Gamma$ and $\Delta$ be two constraint languages such that $\Delta$ is pp-interpretable in $\Gamma$ and $\Gamma$ has short pp-definitions. Then, does $\Delta$ also have short pp-definitions?
\end{question}

Note that Conjecture~\ref{conjecture:short-definitions-iff-few-subpowers} implies a positive answer. We remark that we do not know the answer to this question even for $\Delta$ being \emph{pp-definable} in $\Gamma$. Only in the special case of \emph{pp-bi-interpretable} structures, Question \ref{question:interpretation} has a positive answer by Lemma \ref{lemma:ppbiinterpret}, we thank the anonymous reviewer of \cite{bulin_short_2023} for this observation.

In algebraic terms, Question \ref{question:interpretation} asks whether for $\algA$ with short pp-definitions, it is the case that also every \emph{extension} of every algebra $\algB \in \HSP^{\mathrm{fin}}(\algA)$ has short pp-definitions; for finite powers $\PPP^{\mathrm{fin}}$, we verified the statement in Lemma \ref{lemma:poweralg}.

\subsection{Efficiently computable short pp-definitions}

In order to improve the complexity result of Theorem \ref{theorem:SMP} and put $\SMP(\algA)$ in the class P, we would need an explicit method of efficiently computing a short pp-definition for a relation $R = \Sg_{\algA^n}\{\bar a_1,\ldots,\bar a_k\}$ given by its generators $\bar a_1,\ldots,\bar a_k$. This motivates the following question:

\begin{question} \label{question:ppp}
    Let $\Gamma$ be a constraint language with short pp-definitions. Is there a polynomial-time algorithm that computes a (short) pp-definition of a relation $R \leq \langle \Gamma \rangle_n$, given by a set of generators $\bar a_1,\ldots,\bar a_k$?
\end{question}

We remark that over the Boolean domain, Question \ref{question:ppp} has a positive answer (see Examples \ref{example:lin} and \ref{example:2SAT}) but we do not know the answer even in the context of residual finiteness.

\subsection{Bounding the polynomial degree}

Finally, recall that the bound from Theorem~\ref{theorem:main-result} is a polynomial of degree $\max(2,k-1)$ if $\Gamma$ has a $k$-edge polymorphism. It is therefore tempting to conjecture that the same degree could be enough in general for Conjecture~\ref{conjecture:short-definitions-iff-few-subpowers}. Note that the number of $n$-ary pp-definable relations, for $\Gamma$ with a $k$-edge polymorphism, is known to be in $2^{O(n^k)}$~\cite[Theorem 3.4]{idziak_tractability_2010}.

\section*{Acknowledgments}

The authors would like to thank Dmitriy Zhuk for inspiring discussions about critical relations and the anonymous reviewers of \cite{bulin_short_2023} for their valuable input, including the suggestion to include Lemma \ref{lemma:ppbiinterpret}.

\bibliographystyle{siamplain}
\bibliography{references}

\end{document}